\newcommand{\p}{\partial}
\newcommand{\dd}{{\rm d}}
\newcommand{\bd}{\begin{definition}}                %inizia definizione
\newcommand{\ed}{\end{definition}}                  %fine definizione
\newcommand{\bc}{\begin{corollary}}                 %inizia corollario
\newcommand{\ec}{\end{corollary}}                   %fine corollario
\newcommand{\bl}{\begin{lemma}}                     %inizia lemma
\newcommand{\el}{\end{lemma}}                       %fine lemma
\newcommand{\bp}{\begin{proposition}}            %inizia proposizione
\newcommand{\ep}{\end{proposition}}                %fine proposizione
\newcommand{\bere}{\begin{remark}}                  %inizia osservazione
\newcommand{\ere}{\end{remark}}                     %fine oservazione
\newcommand{\bt}{\begin{theorem}}
\newcommand{\et}{\end{theorem}}
\newcommand{\be}{\begin{equation}}
\newcommand{\ee}{\end{equation}}
\newcommand{\bit}{\begin{itemize}}
\newcommand{\eit}{\end{itemize}}
\newtheorem{theorem}{Theorem}[section]
\newtheorem{corollary}[theorem]{Corollary}
\newtheorem{lemma}[theorem]{Lemma}
\newtheorem{proposition}[theorem]{Proposition}
\theoremstyle{definition}
\newtheorem{definition}[theorem]{Definition}
\theoremstyle{remark}
\newtheorem{remark}[theorem]{Remark}
\newtheorem{example}[theorem]{Example}
\begin{document}
%
%\DeclareGraphicsExtensions{.pdf}

%\title{Connection between Lorentzian distance and mechanical least action in spacetimes admitting a parallel null vector}

\title{Completeness of first and second order ODE flows and of Euler-Lagrange equations}

\author{E. Minguzzi\footnote{Dipartimento di Matematica e Informatica ``U. Dini'', Universit\`a degli Studi di Firenze,  Via
S. Marta 3,  I-50139 Firenze, Italy. E-mail:
ettore.minguzzi@unifi.it}}

%\pacs{04.20.Gz, 04.30.-w}

\date{}
\maketitle

\begin{abstract}
\noindent Two  results on the completeness of maximal solutions to
first and second order ordinary differential equations (or
inclusions) over complete Riemannian manifolds, with possibly
time-dependent metrics, are obtained. Applications to Lagrangian
mechanics and gravitational waves are given.
\end{abstract}

%\noindent Key Words:

\section{Introduction}

In this work we shall obtain some completeness result for maximal
solutions to ordinary first and second order equations over a
complete Riemannian manifold $(Q,a)$. We are concerned with the
equations
\begin{align}
\dot{q}&=\nu(t,q) \label{nia} \\
\frac{D }{\dd t}\, \dot{q}&=f(t,q(t), \dot{q}(t)), \label{nio}
\end{align}
where $D$ is the affine connection of $(Q,a)$, $\nu\colon
\mathbb{R}\times Q\to TQ$ is a time-dependent vector field and
$f\colon \mathbb{R} \times TQ\to TQ$ is a time-dependent and velocity
dependent vector field.

For $Q=\mathbb{R}^n$ and Eq.\ (\ref{nia}) the problem is answered
satisfactorily by Wintner's theorem \cite{wintner45,wintner46}
\cite[Theor.\ 5.1, Chap.\ 3]{hartman64} on the completeness of {\em
first order} flows on $\mathbb{R}^n$, and by its refinements and
variations
\cite{wintner46b,cooke55,conti56,conti56b,strauss67,burton77,constantin91,gliklikh06}.
This solution implies a straightforward answer to the analogous
completeness problem for Eq.~(\ref{nio}), indeed, this second order
equation  can be rewritten as a first order equation on $TQ$, which
for $Q=\mathbb{R}^n$ is diffeomorphic to $\mathbb{R}^{2n}$. In any
case there have been other studies of Eq.\ (\ref{nio}) in
$\mathbb{R}^n$ and especially in $\mathbb{R}$. For instance, Hartman
and Wintner \cite{hartman51,mustafa02} \cite[Theor. 5.2, Chap.\
12]{hartman64}, taking advantage of some estimates by Bernstein and
Nagumo, showed that if $f$ has an almost quadratic asymptotic
dependence on $v$, and satisfies some other assumptions, then a
forward complete solution exists which starts from any prescribed
initial point $q(0)=q_0$. However, in their theorem they could not
impose an a priori  value for $\dot{q}(0)$. This limitation can be
regarded as a consequence of the fact that the quadratic dependence
on the velocity is just at the verge of spoiling completeness
(Example \ref{cxe}).

%For $Q\ne \mathbb{R}^n$
%
%
%The autonomous case for force fields $f$ of gradient type was
%considered in \cite{ebin70,weinstein70,gordon70}, while the non
%autonomous case for slightly more general fields, with an additional
%affine dependence on the velocity, has been  considered in
%\cite{candela12}. These works do not mention Wintner's theorem
%\cite{wintner45} \cite[Theor.\ 5.1, Chap.\ 3]{hartman64} on the
%completeness of {\em first order} flows on $\mathbb{R}^n$. Wintner
%result was followed by several refinements and variations
%\cite{wintner46,cooke55,conti56,conti56b,strauss67,burton77,constantin91,gliklikh06}.
%Clearly, the second order equation (\ref{nio}) can be rewritten as a
%first order equation on $TQ$. Now,

If $Q$ is curved we still have that every coordinate chart
$\{q^\alpha\}$ on an open set $U\subset Q$ induces a chart
$\{q^\alpha, \dot{q}^\beta\}$ on $\pi^{-1}(U)\subset TQ$, where
$\pi: TQ\to Q$ is the usual projection. Thus one can tackle  the
problem of completeness to maximal solutions to the second order
differential equation (\ref{nio}) by studying the completeness of a
flow on  open sets of $\mathbb{R}^n$. Wintner's theorem assures that
under suitable assumptions the solution will be either complete or
will reach the boundary of the coordinate patch (i.e.\  $q(t)$ will
reach the boundary of $U$). In the latter case Wintner's theorem can
be applied once again, so that, inductively, either the solution is
complete or it crosses infinite coordinate charts in finite time.
The application of Wintner's theorem to the general Riemannian case
is not completely straightforward since one would like to formulate
the hypothesis in a natural coordinate-free language, for instance
expressing Wintner's conditions  on the asymptotic behavior of $f$
in terms of functions constructed from the Riemannian distance.

The problem of the completeness of maximal solutions to Eq.\
(\ref{nio}) in Riemannian spaces was considered by other authors
who, apparently unaware of Winter's work, passed through similar
ideas. The autonomous case for force fields $f$ of gradient type,
was considered in \cite{ebin70,weinstein70,gordon70}, while the non
autonomous case for more general fields, with an additional affine
dependence on the velocity, has been  recently considered in
\cite{candela13}. The results obtained in these studies  do not seem
to reach the generality that one would expect from the analogy with
Wintner's results in $\mathbb{R}^n$, particularly in connection with
the possible superlinear growth of the fields $\nu$ and $f$, or of
the dependence of $f$ on the velocity. In this work we shall remedy
this situation, extending Wintner's result to Riemannian manifolds
and to second order ODE flows.

%Perhaps for this reason Wintner's theorem is not mentioned in the
%next cited work which consider the curved manifold case. The
%autonomous case for force fields $f$ of gradient type was considered
%in \cite{ebin70,weinstein70,gordon70}, while the non autonomous case
%for slightly more general fields, with an additional affine
%dependence on the velocity, has been only recently considered in
%\cite{candela12}.

Actually, we shall  also deal with an interesting related problem.
Let $(Q,a_t)$ be a manifold endowed with a time-dependent Riemannian
metric which is complete for every $t$. We are going to study the
completeness of the maximal solutions to the equation
\begin{equation}\label{hud}
\frac{D^{(t)} }{\dd t}\, \dot{q}=f(t,q(t), \dot{q}(t)),
\end{equation}
where $D^{(t)}$ is the affine connection of $(Q,a_t)$ and $f$ is as
above. Apparently there is no difference between the two equations
because the latter can be rewritten $\frac{D^{(0)} }{\dd t}\,
\dot{q}=f(t,q(t), \dot{q}(t))+T(\dot{q},\dot{q})$ where
$T=D^{(0)}-D^{(t)}$ is a tensor field of type $(1,2)$ symmetric in
the lower indices. However, we will be able to assure completeness
in the case in which
 $f$ has a component which depends quadratically in the
velocity only if the dependence is of the kind
%In general, in order to attain completeness we will not be able to
%put sufficient conditions on a force field $f$ which depends
%quadratically on the velocity saved for the terms of this type that
%are
naturally embodied in Eq.\ (\ref{hud}). This fact singles Eq.\
(\ref{hud}) as particularly well shaped for our completeness study.
Furthermore, and more importantly, the Euler-Lagrange equations of
classical mechanics that are met in rheonomic mechanical systems are
naturally written in this form where $a_t$ is the time dependent
matrix appearing in the Lagrangian kinetic term
\cite{greenwood77,goldstein65} (see also section \ref{app} below).

Our proof applies to the general case of Eq.\ (\ref{hud}) and is
based on a result in Riemannian geometry which allows us to avoid
the Nash embedding of $Q$ in $\mathbb{R}^n$ used in Gordon's proof
\cite{gordon70}. In Gordon's work this embedding was used in order
to construct a smooth proper function on the manifold. Actually, we
do not need smoothness and so we can take the distance squared as
proper function so as to give a straightforward geometrical meaning
to our bounds.

%improvement of Gordon's proof \cite{gordon70} which allows us to
%avoid the Nash embedding in $\mathbb{R}^n$ and at the same time

 %We shall make essential use of the Gronwall's inequality (actually
%we shall use the non-linear generalization by LaSalle and Bihari
%\cite{lasalle49,bihari56}).

% and we shall take advantage of the
%following  result in Riemannian geometry.
\section{Some distance inequalities}

In this section we obtain some inequalities for the Riemannian
distance.  They will allow us to construct proper functions which
are not necessarily smooth.

\begin{proposition} \label{buy}
Let $(Q,a_t)$ be a  Riemannian manifold of class $C^3$, and suppose
that for each $t$ the metric $a_t$ is $C^2$ with respect to the time
and space coordinates. Let $\rho_t\colon Q\times Q \to [0, +\infty]$ be
the Riemannian distance of $(Q,a_t)$, then $\rho_t(p,q)$ is
continuous in $(t,p,q)$.

Moreover, for fixed $p\in Q$, defined $R: \mathbb{R}\times Q \to
[1,+\infty)$ and $E\colon \mathbb{R}\times TQ \to [1,+\infty)$ with (here
$\Vert v\Vert_t^2=a_t(v,v)$)
\begin{align}
R(t,q)&= 1+\rho_t(p,q),\\
E(t,q,v)&=1+\rho_t(p,q)^2+\Vert v\Vert_t^2, \label{nux}
\end{align}
the functions $\vert t\vert+ R(t,q)$ and $\vert t\vert+ E(t,q,v)$,
respectively over $\mathbb{R}\times Q$ and $\mathbb{R}\times TQ$,
are proper.
\end{proposition}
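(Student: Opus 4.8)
The plan is to prove the three assertions in turn, reducing the two properness claims to compactness of sublevel sets. \emph{Continuity of $\rho_t$.} I would establish upper and lower semicontinuity separately at a fixed $(t_0,p_0,q_0)$. For the upper bound, fix a curve $\gamma$ from $p_0$ to $q_0$ whose $a_{t_0}$-length is within $\epsilon$ of $\rho_{t_0}(p_0,q_0)$; perturbing its endpoints to $p,q$ and using the $C^0$-dependence of $a_t$ on $(t,\text{space})$, its $a_t$-length varies continuously, giving $\limsup\rho_t(p,q)\le\rho_{t_0}(p_0,q_0)$. The lower bound is the delicate point. By completeness (Hopf--Rinow) the closed ball $K=\bar B_{t_0}(p_0,\rho_{t_0}(p_0,q_0)+2)$ is compact, and by continuity and positive-definiteness of $a_t$ there is, for $t$ near $t_0$, a bound $a_t\ge(1-\epsilon)\,a_{t_0}$ on $K$ as quadratic forms. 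An $a_t$-minimizing geodesic $\sigma$ from $p$ to $q$ either stays in $K$, whence $\rho_t(p,q)=L_t(\sigma)\ge\sqrt{1-\epsilon}\,L_{t_0}(\sigma)\ge\sqrt{1-\epsilon}\,\rho_{t_0}(p,q)$, or it leaves $K$; but leaving $K$ forces $L_{t_0}(\sigma)$, hence $L_t(\sigma)$, above a fixed large value, contradicting the upper bound already available for $(t,p,q)$ near $(t_0,p_0,q_0)$. Thus $\sigma$ stays in $K$ and $\liminf\rho_t(p,q)\ge\rho_{t_0}(p_0,q_0)$, yielding continuity.

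\emph{Properness of $\vert t\vert+R$.} Since the function is continuous (by the previous step together with continuity of $a_t$) and bounded below, properness is equivalent to compactness of every sublevel set. On $\{\vert t\vert+1+\rho_t(p,q)\le c\}$ one has $t\in I:=[-c,c]$ and $\rho_t(p,q)\le c$. Running the comparison argument near an arbitrary $t_*\in I$ bounds $\rho_{t_*}(p,q)$ by a fixed multiple of $\rho_t(p,q)\le c$, confining $q$ to a compact $a_{t_*}$-ball; covering the compact interval $I$ by finitely many such neighborhoods shows the projection of the sublevel set to $Q$ lies in a finite union $\bar Q$ of compact balls, hence is precompact. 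The sublevel set, a closed subset of the compact set $I\times\bar Q$, is therefore compact.

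\emph{Properness of $\vert t\vert+E$.} The base component is handled exactly as above: on $\{\vert t\vert+E\le c\}$ the pair $(t,q)$ lies in a compact set $\tilde S\subseteq I\times Q$, with the $q$-components confined to a compact $\bar Q\subseteq Q$. It remains to bound the velocity. Since $a_t(v,v)$ is continuous and strictly positive on the unit sphere bundle over the compact set $I\times\bar Q$, it is bounded below there by a positive constant; hence $\Vert v\Vert_t^2\le c$ forces $\Vert v\Vert_0^2\le c'$ for a fixed reference metric, confining $v$ to a compact subset of $TQ$. The sublevel set is thus a closed subset of a compact set, hence compact.

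The main obstacle is the lower semicontinuity of $\rho_t$: a priori an $a_t$-minimizing geodesic need not remain in the compact region where the metrics are comparable, and the escape-or-long dichotomy combined with the prior upper bound is exactly what rules out such escape for nearby parameters. That same dichotomy is what upgrades the pointwise comparison $a_t\ge(1-\epsilon)a_{t_0}$ into the genuine distance comparison used throughout the properness arguments.
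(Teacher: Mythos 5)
Your proof is correct, but it follows a genuinely different route from the paper's. For the delicate half of continuity (lower semicontinuity in $t$) and for confining $q$ in the properness arguments, the paper works through the exponential map: it writes $q=\exp^{t_n}_p v_n$ with $\Vert v_n\Vert_{t_n}=\rho_{t_n}(p,q)$, extracts a convergent subsequence $v_n\to v$ using only the pointwise convergence $a_{t_n}\vert_p\to a_t\vert_p$, and invokes the theorem on continuous dependence of ODE solutions on initial data and parameters to conclude $\exp^{t_n}_p v_n\to\exp^t_p v$, reaching a contradiction with $\Vert v\Vert_t<\rho_t(p,q)$; the spatial variables are handled separately by the triangle inequality against an auxiliary dominating metric $\hat a$. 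You instead establish a uniform two-sided comparison $a_t\ge(1-\epsilon)a_{t_0}$ on a large compact ball $K$ (Hopf--Rinow) and rule out escape of near-minimizers from $K$ by playing the comparison against the already-proved upper bound; the same ball-confinement plus a finite cover of the time interval gives properness, and the fiber direction is handled as in the paper by a uniform lower bound on $a_t$ over a compact base. Your approach is more elementary and self-contained --- it avoids the ODE continuous-dependence theorem and the exponential map entirely, would survive with merely $C^0$ dependence of $a_t$ on $t$, and actually yields the stronger quantitative statement $\sqrt{1-\epsilon}\,\rho_{t_0}\le\rho_t\le\sqrt{1+\epsilon}\,\rho_{t_0}$ locally uniformly on compacta; the paper's argument is shorter where it applies and only needs the metric comparison at the single point $p$ rather than uniformly on a ball. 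Note that both arguments silently use completeness of each $(Q,a_t)$ (surjectivity of $\exp^t_p$ in the paper, Hopf--Rinow in yours) even though the proposition as stated does not list it; this is harmless since the proposition is only invoked for complete families, but it is worth flagging.
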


\begin{proof}
Let us prove continuity at $(t,p,q)$. Let $\epsilon>0$, using the
$\sigma$-compactness of $Q$ we can find a metric $\hat{a}$ which is
at every point larger than $a_t$, for every $t\in
[t-\epsilon,t+\epsilon]$ (in the sense that the unit balls of
$\hat{a}$ on the tangent space are contained in those of $a_t$) (the
metric $\hat{a}$ need not be complete). The distance $\hat\rho$ is
continuous and the topology induced by $\hat\rho$ coincides with the
manifold topology. Using the triangle inequality for $t'\in
[t-\epsilon,t+\epsilon]$ we obtain
\begin{align*}
\vert \rho_{t'}(p',q')-\rho_t(p,q)\vert & \le \vert
\rho_{t'}(p',q')-\rho_{t'}(p,q) \vert+\vert
\rho_{t'}(p,q)-\rho_t(p,q)\vert \\
&\le \vert \rho_{t'}(p',q')-\rho_{t'}(p,q') \vert+\vert
\rho_{t'}(p,q')
-\rho_{t'}(p,q) \vert\\ & \quad +\vert \rho_{t'}(p,q)-\rho_t(p,q)\vert\\
& \le \rho_{t'}(p,p')+\rho_{t'}(q,q') +\vert
\rho_{t'}(p,q)-\rho_t(p,q)\vert \\
& \le \hat{\rho}(p,p')+\hat{\rho}(q,q') +\vert
\rho_{t'}(p,q)-\rho_t(p,q)\vert ,
\end{align*}
which implies that we need only to prove that for fixed $p,q\in Q$,
$\rho_{t'}(p,q) \to \rho_{t}(p,q)$ if $t'\to t$.

Let $\sigma\colon [0,1]\to Q$, $s\to \sigma(s)$, be a minimizing geodesic
which connects $p$ to $q$ in $(Q,a_t)$. Let $v$ be any vector field
over $\sigma([0,1])$. The function $a_{t'}(q')(v,v)$ is continuous
in $(t',q')$ and hence uniformly continuous over the compact set
$[t-\epsilon,t+\epsilon] \times \sigma([0,1])$, a fact which implies
the inequality for every $(t',q'), (t'',q'')\in
[t-\epsilon,t+\epsilon] \times \sigma([0,1])$
\[
\vert a_{t''}(q'')(v,v)-a_{t'}(q')(v,v)\vert \le
o(\hat{\rho}(q',q'')+\vert t''-t'\vert) \, .
\]
With $q''=q'$ the inequality states that for $t'\to t$,
$a_{t'}(v,v)\to a_t(v,v)$ uniformly over $\sigma([0,1])$. Let $v=\dd
\sigma/\dd s$ and let $l_t$ be the length functional for $(Q,a_t)$,
then $l_{t'}(\sigma)\to l_t(\sigma)=\rho_t(p,q)$, for $t'\to t$,
which implies that for every $\delta>0$ we have for $t'$
sufficiently close to $t$, $\rho_{t'}(p,q)\le l_{t'}(\sigma) \le
\rho_t(p,q)+\delta$. As $\rho_{t'}(p,q)$ stays bounded  by $
\rho_t(p,q)+\delta$ in the limit, if by contradiction,
$\rho_{t'}(p,q) \to \rho_{t}(p,q)$ does not hold, then we can find a
sequence $t_n\to t$ such that $\rho_{t_n}(p,q) \to L\ne
\rho_{t}(p,q)$, $L< \rho_t(p,q)+\delta$. As $\delta$ is arbitrary,
$L\le \rho_t(p,q)$, and hence $L< \rho_t(p,q)$.

Let $\exp^t$ be the exponential map for $(Q,a_t)$ and let $v_n\in
T_pQ$ be such that $\exp^{t_n}_p v_n=q$, thus $\Vert v_n
\Vert_{t_n}=\rho_{t_n}(p,q)$. Since $a_{t_n}\vert_p\to a_t\vert_p$
the sequence $v_n$ converges (pass to a subsequence if necessary) to
some vector $v\in T_pQ$, and $\Vert v_n \Vert_{t_n}\to \Vert v
\Vert_{t}$, thus $\Vert v \Vert_{t}=L< \rho_t(p,q)$. The geodesic
equation is a first order differential equation over $TQ$, thus
standard results \cite[Theor.\ 3.1]{hartman64} on the continuity of
first order (on $TQ$ in this case) differential equations with
respect to initial conditions ($v$ in this case) and external
parameters ($t$ in this case) imply that $q=\exp_p^{t_n} v_n \to
\exp_p^t v$. Since $\Vert v \Vert_{t}< \rho_t(p,q)$ we have that
$\exp_p^t v\ne q$. The contradiction proves that $\rho_{t'}(p,q)\to
\rho_{t}(p,q)$.

Let us prove that $F:=\vert t\vert+ E(t,q,v)$ is proper (the proof
that $\vert t\vert+R(t,q)$ is proper is contained in this one).
Clearly $F$ is continuous thus the inverse image of a compact set is
closed. If there is a compact set $K$ such $F^{-1}(K)$ is not
compact, then we can assume with no loss of generality $K=[-B,B]$
for some $B>0$, and we can find a sequence $(t_n,q_n,v_n)$ which
escapes every compact set of $\mathbb{R}\times TQ$ and is such that
 $\vert F(t_n,q_n,v_n)\vert\le B$. However, due to the
expression of $F$, $\vert t_n\vert \le B$, thus we can assume with
no loss of generality that $t_n\to t$ for some $t\in [-B,B]$.
Moreover, $\rho_{t_n}(p,q_n)^2\le B$ thus let $w_n \in T_pQ$ be a
vector such that $\exp^{t_n}_p w_n=q_n$; we have $\Vert
w_n\Vert_{t_n}=\rho_{t_n}(p,q_n)\le B^{1/2}$. Since
$a_{t_n}\vert_p\to a_t\vert_p$ we can assume with no loss of
generality (i.e.\ passing to a subsequence if necessary) that
$w_n\to w\in T_pQ$. Using again \cite[Theor.\ 3.1]{hartman64} we
obtain $q_n=\exp^{t_n}_p w_n \to \exp^t_p w=:\hat q$. As a
consequence, the sequence $q_n$ is contained in a compact set
$\tilde{K}\ni \hat q$. We can find a metric $\check{a}$ which is
smaller than $a_t$ in $\tilde{K}$ for $t\in [-B,B]$ (in the sense
clarified above). Thus the bound on $F(t_n,q_n,v_n)$ implies
$\check{a}(v_n,v_n)\le B$ which proves that the sequence
$(t_n,q_n,v_n)$ is actually contained in a compact set, a
contradiction.
\end{proof}

We need a simple preliminary lemma (for a more general version see \cite[Lemma 16.4]{amann90}).

\begin{lemma} \label{sop}
Let $f\colon [a,b]\to \mathbb{R}$ be a continuous function whose right
upper Dini derivative satisfies
\[
D^+ f(x):=\limsup_{h\to 0^+} \frac{f(x+h)-f(x)}{h} \le g'(x),
\]
 where $g\colon [a,b] \to \mathbb{R}$ is a $C^1$
function, then $f-g$ is non-increasing over the interval $[a,b]$.
\end{lemma}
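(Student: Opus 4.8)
The plan is to reduce the statement to the special case $g\equiv 0$ and then exploit a small linear perturbation to turn the non-strict inequality $D^+(f-g)\le 0$ into a strict one, which is far easier to control.

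First I would set $\phi:=f-g$ and observe that, because $g$ is $C^1$, its difference quotients converge to $g'(x)$, so that
\[
D^+\phi(x)=\limsup_{h\to 0^+}\Big(\frac{f(x+h)-f(x)}{h}-\frac{g(x+h)-g(x)}{h}\Big)=D^+f(x)-g'(x)\le 0
\]
for every $x\in[a,b)$. Here the hypothesis that $g$ be $C^1$ is exactly what lets one pull the differentiable term out of the $\limsup$ as an honest limit. Thus it suffices to show that a continuous function $\phi$ on $[a,b]$ with $D^+\phi\le 0$ on $[a,b)$ is non-increasing.

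Next, for a fixed $\epsilon>0$ I would introduce $\phi_\epsilon(x):=\phi(x)-\epsilon x$, so that $D^+\phi_\epsilon(x)=D^+\phi(x)-\epsilon\le-\epsilon<0$ on $[a,b)$, and prove that $\phi_\epsilon$ is non-increasing. Arguing by contradiction, suppose there were $x_1<x_2$ in $[a,b]$ with $\phi_\epsilon(x_1)<\phi_\epsilon(x_2)$, and set $c:=\sup\{x\in[x_1,x_2]:\phi_\epsilon(x)\le\phi_\epsilon(x_1)\}$. By continuity $\phi_\epsilon(c)\le\phi_\epsilon(x_1)$, and since $\phi_\epsilon(x_2)>\phi_\epsilon(x_1)$ we have $c<x_2\le b$, whence $\phi_\epsilon(x)>\phi_\epsilon(x_1)\ge\phi_\epsilon(c)$ for all $x\in(c,x_2]$. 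Taking $x=c+h$ with $0<h<x_2-c$ forces the difference quotients $\big(\phi_\epsilon(c+h)-\phi_\epsilon(c)\big)/h$ to be strictly positive, so $D^+\phi_\epsilon(c)\ge 0$, contradicting $D^+\phi_\epsilon(c)\le-\epsilon$.

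Finally, having established $\phi_\epsilon(x_2)\le\phi_\epsilon(x_1)$, i.e. $\phi(x_2)-\phi(x_1)\le\epsilon(x_2-x_1)$ for arbitrary $x_1<x_2$, I would let $\epsilon\to 0^+$ to obtain $\phi(x_2)\le\phi(x_1)$, that is, $f-g$ is non-increasing. The only delicate point is the one just highlighted: because $D^+$ is a one-sided $\limsup$ rather than a genuine derivative, one cannot integrate the inequality directly, so the fundamental theorem of calculus is unavailable; the strict-perturbation-plus-supremum device is precisely what replaces it, and I expect that to be the crux of the argument.
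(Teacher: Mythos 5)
Your proposal is correct and follows essentially the same route as the paper: reduce to $F=f-g$ with $D^+F\le 0$, add a linear tilt to make the Dini derivative strictly negative, and derive a contradiction at an extremal point (the paper uses the minimum of $F-\tfrac{r}{2}(x-a')$, you use the supremum of a sublevel set of $\phi-\epsilon x$ followed by $\epsilon\to 0^+$). Both arguments are sound; the differences are cosmetic.
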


\begin{proof}
Let us define $F:=f-g$, so that $D^+F=D^+f-g' \le 0$. Suppose by
contradiction, that there are $a', b'\in [a,b]$, $a'<b'$, such that
$F(b')>F(a')$,  and let $r=\frac{F(b')-F(a')}{b'-a'}>0$. The
continuous function $h=F-\frac{r}{2}(x-a')$ has a minimum at $c\in
[a',b']$, and since $h(b')=\frac{F(a')+F(b')}{2}>F(a')=h(a')$, we
have $c\ne b'$. Thus $D^+F(c)=D^+h(c)+\frac{r}{2}\ge \frac{r}{2}>0$,
a contradiction.
\end{proof}

The proof of the next proposition would be considerably shortened
under the assumption $\partial_t a_t=0$. However, we shall need the
following version in order to deal with a time dependent
metric.\footnote{An heuristic way of obtaining Eq.\ (\ref{ers})
consists in working out  $\partial_t[\rho_t^2(p,q(t))]$, expressing
the squared distance as an 'energy' action integral over a minimal
geodesic. The reader has to use the Cauchy-Schwarz inequality for
the scalar product that there appears and then integrate. This
method is not rigorous since even for $\partial_t a_t=0$,
$\rho^2(p,q(t))$ is not always differentiable when $q(t)$ passes
through a cut point to $p$. This problem cannot be easily amended
since the cut points are not necessarily isolated.}

\begin{proposition}
Let $(Q,a_t)$  and $\rho_t$ be as in Prop.\ \ref{buy}.
\begin{itemize}
\item[(i)]
Suppose that the time derivative of the metric $a_t$ is bounded as
follows: there is a continuous function  $g\colon [1,+\infty)\to
[1,+\infty)$ such that
\begin{equation} \label{biq}
\pm  (\partial_t a_t)(v,v) \le 2 g(1+\rho_t(p,q)) \, a_t(v,v),
\end{equation}
at every point $(t,q,v)\in \mathbb{R}\times TQ$. Let $q\colon I\to Q$ be
a $C^1$ curve, then  for $\underline{t}, \overline{t}\in I$,
$\underline{t}<\overline{t}$, we have respectively
\begin{align}
\pm [\rho_{\overline{t}}(p, q(\overline{t}))-\!
\rho_{\underline{t}}(p, q(\underline{t})) ] &\le \!\!
\int_{\underline{t}}^{\overline{t}}\! \{ \Vert \dot{q} \Vert_t
+g(1\!+\!\rho_t(p,q(t)))\, \rho_t(p, q(t))\} \,\dd t, \label{erx}
\end{align}
\item[(ii)] Suppose  that the time derivative of the metric $a_t$ is
bounded as follows: there is a continuous function $g\colon [1,+\infty)\to
[1,+\infty)$ such that
\begin{equation} \label{biq2}
\pm  (\partial_t a_t)(v,v) \le 2 g(1+\rho_t(p,q)^2) \, a_t(v,v),
\end{equation}
at every point $(t,q,v)\in \mathbb{R}\times TQ$. Let $q\colon I\to Q$ be
a $C^1$ curve, then  for $\underline{t}, \overline{t}\in I$,
$\underline{t}<\overline{t}$, we have respectively
\begin{align}
\pm [\rho_{\overline{t}}(p,
q(\overline{t}))^2-&\rho_{\underline{t}}(p,
q(\underline{t}))^2 ]\le \nonumber \\
& \!\!\!\!\!\!\!\!\!\!\!\!\!\!\!\!\!\!\! \le  2\!\!
\int_{\underline{t}}^{\overline{t}}\! \{\rho_t(p, q(t)) \Vert
\dot{q} \Vert_t +g(1+\rho_t(p,q(t))^2)\, \rho_t(p, q(t))^2\} \,\dd
t. \label{ers}
\end{align}
\end{itemize}

\end{proposition}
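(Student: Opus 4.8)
The plan is to reduce both inequalities to a single one-variable scalar estimate and feed it into Lemma \ref{sop}. For part (ii) set $\phi(t)=\rho_t(p,q(t))^2$ (for part (i), $\phi(t)=\rho_t(p,q(t))$), and abbreviate $r(t):=\rho_t(p,q(t))$; by Proposition \ref{buy} the map $\phi$ is continuous on $I$. Let $G(t)=2\int_{\underline t}^{t}\{r(\tau)\Vert\dot q\Vert_\tau+g(1+r(\tau)^2)r(\tau)^2\}\,\dd\tau$ be the right member of (\ref{ers}), which is $C^1$ with $G'(t)=2\{r(t)\Vert\dot q\Vert_t+g(1+r(t)^2)r(t)^2\}$. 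If I can show $D^+\phi(t)\le G'(t)$ on $I$, then Lemma \ref{sop} makes $\phi-G$ non-increasing and evaluation at $\underline t,\overline t$ gives the $+$ case of (\ref{ers}); the $-$ case follows by applying the same lemma to $-\phi$, for which I must instead bound $D^+(-\phi)=-D_+\phi$, i.e.\ bound the increment of $\phi$ from below. Thus the whole problem reduces to a two-sided estimate of the difference quotient of $\phi$.

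To estimate $\phi(t+h)-\phi(t)=\rho_{t+h}(p,q(t+h))^2-\rho_t(p,q(t))^2$ for small $h>0$ I would insert the intermediate term $\rho_{t+h}(p,q(t))^2$ and handle the two resulting pieces separately. The \emph{endpoint-motion} piece $\rho_{t+h}(p,q(t+h))^2-\rho_{t+h}(p,q(t))^2$ is controlled by the triangle inequality: since $\rho_{t+h}(q(t),q(t+h))\le\int_t^{t+h}\Vert\dot q\Vert_{t+h}\,\dd\tau=\Vert\dot q(t)\Vert_t\,h+o(h)$, squaring shows this piece is at most $2r(t)\Vert\dot q(t)\Vert_t\,h+o(h)$, contributing $2r\Vert\dot q\Vert_t$ to $D^+\phi$. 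The \emph{metric-variation} piece $\rho_{t+h}(p,q(t))^2-\rho_t(p,q(t))^2$ is where hypothesis (\ref{biq2}) enters: I would fix a minimizing geodesic $\sigma:[0,1]\to Q$ from $p$ to $q(t)$ for $a_t$ in the affine (energy) parametrization, so that $\Vert\dot\sigma\Vert_t\equiv r(t)$ and $r(t)^2=\int_0^1 a_t(\dot\sigma,\dot\sigma)\,\dd s$, and use it as a \emph{competitor} for $a_{t+h}$, giving $\rho_{t+h}(p,q(t))^2\le\int_0^1 a_{t+h}(\dot\sigma,\dot\sigma)\,\dd s$. Expanding $a_{t+h}=a_t+h\,\partial_t a_t+o(h)$ and applying (\ref{biq2}) with $\rho_t(p,\sigma(s))=s\,r(t)$ bounds the $\partial_t a_t$ integrand by $2g(1+\rho_t(p,\sigma(s))^2)\,a_t(\dot\sigma,\dot\sigma)$, so this piece is at most $2h\int_0^1 g(1+s^2r(t)^2)r(t)^2\,\dd s+o(h)$. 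The $C^2$ regularity of $a_t$ in $(t,\mathrm{space})$ from Proposition \ref{buy} together with compactness of $\sigma([0,1])$ make these expansions uniform, legitimizing the $o(h)$ terms. Part (i) is identical with the arc-length parametrization $\Vert\dot\sigma\Vert_t\equiv1$, using $\sqrt{a_{t+h}(\dot\sigma,\dot\sigma)}\le1+h\,g(1+s)+o(h)$ in place of the energy expansion.

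Collecting the two pieces gives $D^+\phi(t)\le2r\Vert\dot q\Vert_t+2\int_0^1 g(1+s^2r^2)r^2\,\dd s$, so the \textbf{one remaining ingredient} is to replace the geodesic integral by the pointwise value $g(1+r^2)r^2$. This is exactly the step where I would use that $g$ is non-decreasing (which may be assumed with no loss for the intended completeness applications, since replacing $g$ by $s\mapsto\max_{[1,s]}g$ leaves (\ref{biq2}) valid), so that $g(1+s^2r^2)\le g(1+r^2)$ and the integral is at most $g(1+r^2)r^2$. This gives $D^+\phi\le G'$ and hence the $+$ case; the same reduction with (\ref{biq}) yields (\ref{erx}).

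For the $-$ case and the lower estimate on the increment I would repeat the argument with the roles of the two metrics exchanged: bound the endpoint piece from below by $-2r(t)\Vert\dot q(t)\Vert_t\,h+o(h)$ via the reverse triangle inequality, and for the metric-variation piece use a minimizing geodesic $\sigma_h$ for the \emph{later} metric $a_{t+h}$ as competitor for $a_t$, invoking the companion bound $-(\partial_t a_t)(v,v)\le2g(\ldots)a_t(v,v)$. The step I expect to require the most care is that $\sigma_h$ now depends on $h$, so I would need a short limiting argument (as $h\to0^+$, $\sigma_h$ subconverges to a minimizing geodesic for $a_t$, much as in the continuity proof of Proposition \ref{buy}) to keep the estimate uniform and recover the same integrand. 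Apart from this, the only genuine subtlety is the one already flagged in the footnote: $\rho_t^2$ need not be differentiable when $q(t)$ meets a cut point of $p$, which is precisely why the argument is carried out through upper Dini derivatives and Lemma \ref{sop} rather than by differentiating the squared distance directly.
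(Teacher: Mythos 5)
Your proof is correct, and it rests on the same two ingredients as the paper's --- splitting each increment into an endpoint-motion piece (controlled by the triangle inequality) and a metric-variation piece (controlled by using an $a_t$-minimizing geodesic as a competitor curve for the perturbed metric), then feeding a Dini-derivative bound into Lemma \ref{sop} --- but the packaging is genuinely different. The paper discretizes $[\underline{t},\overline{t}]$ into $n$ subintervals, bounds each increment by $A_k+B_k$ (resp.\ $2(A_k+B_k)C_k$ for the squared version), applies the Dini-derivative lemma only to $t\mapsto\rho_t(p,r)$ at \emph{fixed} $r$, and recovers the integrals as limits of Riemann sums; you instead apply Lemma \ref{sop} directly to the composite $\phi(t)=\rho_t(p,q(t))^2$, which eliminates all the Riemann-sum bookkeeping, and your use of the energy functional $\int_0^1 a_{t+h}(\dot\sigma,\dot\sigma)\,\dd s$ yields the squared inequality in one stroke rather than via the products $B_kC_k$. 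The price is paid in the minus case: the paper obtains it for free by time reversal ($q'(t)=q(-t)$, $a'_t=a_{-t}$, so $\rho'_t(p,q'(t))=\rho_{-t}(p,q(-t))$, and the hypotheses are $\pm$-symmetric), whereas your direct lower bound on the increment requires the $h$-dependent minimizing geodesics $\sigma_h$ and a subconvergence argument --- the one place where your writeup is a sketch rather than a proof. I would simply adopt the time-reversal reduction and delete that paragraph. Finally, you are right that the passage from $\int_0^1 g(1+s^2r^2)r^2\,\dd s$ to $g(1+r^2)r^2$ needs $g$ non-decreasing: the paper's proof uses exactly the same monotonicity at the step ``$\rho_t(p,\gamma_t(s))$ grows with $s$, and hence\dots'' (and Remark \ref{rem} builds it into $g_r$), so flagging it as a harmless strengthening of the hypothesis is consistent with the paper rather than a gap.
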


\begin{proof}

Let us prove the inequality Eqs.\ (\ref{erx}) and (\ref{ers}) with
the left-hand side replaced respectively by $\rho_t(p,
q(\overline{t}))-\rho_t(p, q(\underline{t}))$ and $\rho_t(p,
q(\overline{t}))^2-\rho_t(p, q(\underline{t}))^2$. The other
direction follows considering the curve $q'(t)=q(-t)$, $q'\colon
[-\overline{t}, -\underline{t}] \to Q$, and the Riemannian spaces
$(Q,a'_{t})$, $a'_{t}=a_{-t}$, in such a way that $\rho_t'(p,
q'(t))=\rho_{-t}(p,q(-t))$.

For any positive integer $n$, let
$\epsilon=(\overline{t}-\underline{t})/n$, and let
$t_k=\underline{t}+k \epsilon$. Let us abbreviate $\rho_{t_k}$ with
$\rho_k$ and $q(t_k)$ with $q_k$. Then for $k=0,1,\ldots, n-1$, we
have, by the triangle inequality
\begin{align*}
\rho_{k+1}(p, q_{k+1})-\!\rho_k(p, q_k) &\le \vert \rho_{k+1}(p,
q_{k+\!1})\!-\!\rho_{k+\!1}(p,
q_{k})\vert +[ \rho_{k+\!1}(p, q_{k})\!-\!\rho_k(p, q_k)]\\
&\le A_k+B_k ,\\
 \rho_{k+1}(p,
q_{k+1})^2-\!\rho_k(p, q_k)^2 &=[ \rho_{k+1}(p, q_{k+1})-\!\rho_k(p,
q_k)]\,  [\rho_{k+1}(p, q_{k+1})+\!\rho_k(p,
q_{k})]\\
&\le \{\vert \rho_{k+1}(p, q_{k+\!1})\!-\!\rho_{k+\!1}(p,
q_{k})\vert +[ \rho_{k+\!1}(p, q_{k})\!-\!\rho_k(p, q_k)]\}\, \\&
\quad \ [\rho_{k+1}(p,
q_{k+1})+\!\rho_k(p, q_{k})]\\
&\le 2[A_k+B_k]C_k ,
\end{align*}
where
\begin{align*}
A_k&= \rho_{k+1}(q_k, q_{k+1}) ,\\
B_k&= \rho_{k+1}(p, q_{k})-\!\rho_k(p, q_k) ,\\
C_k&=  [\rho_{k+1}(p, q_{k+1})+\!\rho_k(p, q_{k})]/2.
\end{align*}
Summing over $k$ and taking the limit for $n\to+\infty$
\begin{align*}
\rho_{\overline{t}}(p, q(\overline{t}))- \rho_{\underline{t}}(p,
q(\underline{t})) &\le \lim_{n\to
+\infty}\sum_{k=0}A_k+\lim_{n\to +\infty}\sum_{k=0} B_k ,
\\
\rho_{\overline{t}}(p, q(\overline{t}))^2- \rho_{\underline{t}}(p,
q(\underline{t}))^2&\le 2\lim_{n\to
+\infty}\sum_{k=0}A_kC_k+2\lim_{n\to +2\infty}\sum_{k=0} B_k C_k.
\end{align*}
By the continuity of $\rho_t(p,q(t))$, there is a point
$\tilde{t}_k\in [t_k, t_{k+1}]$ such that
\[
\rho_{\tilde{t}_k}(p,q(\tilde{t}_k))=C_k.
\]
Moreover, there is some $\hat{t}_k\in [t_k, t_{k+1}]$ such that
\[A_k=\rho_{k+1}(q_k, q_{k+1}) \le \int_{t_k}^{t_{k+1}}
\Vert \dot{q}\Vert_{t_{k+1}} \,\dd t = \Vert
\dot{q}(\hat{t}_k)\Vert_{t_{k+1}} \, \epsilon,
\]
thus
\[
2A_k C_k\le 2 \rho_{\tilde{t}_k}(p,q(\tilde{t}_k)) \Vert
\dot{q}(\hat{t}_k)\Vert_{t_{k+1}} \, \epsilon,
\]
and
\begin{align*}
\lim_{n\to +\infty}\sum^{n-1}_{k=0} A_k&\le  \lim_{n\to +\infty}
\sum_{k=0}^{n-1}
\Vert \dot{q}(\hat{t}_k)\Vert_{t_{k+1}}\, \frac{1}{n},\\
2\lim_{n\to +\infty}\sum^{n-1}_{k=0} A_k C_k&\le  \lim_{n\to
+\infty} \sum_{k=0}^{n-1}
 2 \,\rho_{\tilde{t}_k}(p,q(\tilde{t}_k))\,
\Vert \dot{q}(\hat{t}_k)\Vert_{t_{k+1}}\, \frac{1}{n}.
\end{align*}
We obtain the first term in the integral argument in the right-hand
side of Eqs.\ (\ref{erx})-(\ref{ers}). Indeed, the right-hand side
of the above equation is the Riemann integral of a continuous
function \cite{hobson21}  (observe that
$\sqrt{a_{t'}(\dot{q}(t),\dot{q}(t))}$ regarded as a function of
$(t',t)$ is continuous and hence uniformly continuous over the
compact set $[\underline{t},\overline{t}]^2$, thus in the previous
expression $(t_{k+1},\hat{t}_k)$ can be replaced by
$(\tilde{t}_k,\tilde{t}_k)$ with a total error which can be made
arbitrarily small).

%
%\begin{align*}
%\rho (p, q(t_{k+1}))^2-\rho(p, q(t_k))^2 &\le 2 \,
%\rho(p,q(\tilde{t}_k))\,
% \rho(q(t_k),
%q(t_{k+1})) \\&\le 2 \,\rho(p,q(\tilde{t}_k))
%\sqrt{a(\dot{q},\dot{q})}(\hat{t}_k) \, \epsilon.
%\end{align*}
%Summing over $k$, $\rho(p, q(\overline{t}))^2-\rho(p,
%q(\underline{t}))^2\le  2\sum_{k=0}^{n-1}\frac{1}{n }\,
%\rho(p,q(\tilde{t}_k)) \sqrt{a(\dot{q},\dot{q})}(\hat{t}_k)$, and
%taking the limit $\epsilon \to 0$ we obtain Eq.\ (\ref{ers}), where
%the right-hand side is the Riemann integral of a continuous function
%\cite{hobson21} (observe that $\sqrt{a(\dot{q},\dot{q})}(t)$ is
%continuous and hence uniformly continuous over
%$[\underline{t},\overline{t}]$, thus in the previous expression
%$\hat{t}_k$ can be replaced by $\tilde{t}_k$ with an error which
%vanishes in the limit).

As for the remaining term, let $l_t$ be the length functional of
$(Q,a_t)$. For any fixed $C^1$ curve $\eta(s)$ the function
$l_t[\eta]$ is $C^1$ in $t$ because (here $\dd \eta/\dd s$ is
denoted $\eta'$)
\[
\partial_t l_t[\eta]=\int_\eta \partial_t\sqrt{a_t(\eta',\eta')}
\, \dd s=\int_\eta \frac{1}{2\sqrt{a_t(\eta',\eta')}}\, {(\partial_t
a_t)(\eta',\eta')} \, \dd s.
\]

We wish to bound $D^+_t \rho_{t}(p,r)$ at any time $t$ where $r$ is
arbitrary and does not depend on $t$.

 Let $\gamma_t$ be a minimizing geodesic of
$(Q,a_t)$ which connects  $p$ to $r$. We have
\[
\limsup_{\varepsilon\to 0^+}\frac{1}{\varepsilon}[
\rho_{t+\varepsilon}(p,r)-\rho_{t}(p,r)]\le \limsup_{\varepsilon\to
0^+}
\frac{1}{\varepsilon}\{l_{t+\varepsilon}[\gamma]-l_t[\gamma]\}=\partial_t
l_t[\gamma_t],
\]
where we have used the fact that at time $t+\epsilon$, for
$\epsilon>0$, $\gamma_t(s)$ is not necessarily minimizing. Thus,
using Eq.\ (\ref{biq}) or (\ref{biq2}) , we obtain (in cases (i) and
(ii) respectively)
\begin{align*}
D^+_t \rho_{t}(p,r)&\le \partial_t l_t[\gamma]=\int_{\gamma_t}
\frac{1}{2\sqrt{a_t(\gamma'_t,\gamma'_t)}}\, {(\partial_t
a_t)(\gamma'_t,\gamma'_t)} \, \dd s\\
&\le\int_{\gamma_t} g(1+\rho_t(p,\gamma_t(s))) \,
\sqrt{a_t(\gamma'_t,\gamma'_t)} \, \dd s.\\
 D^+_t \rho_{t}(p,r)&\le
\partial_t l_t[\gamma]=\int_{\gamma_t}
\frac{1}{2\sqrt{a_t(\gamma'_t,\gamma'_t)}}\, {(\partial_t
a_t)(\gamma'_t,\gamma'_t)} \, \dd s\\
&\le\int_{\gamma_t} g(1+\rho_t(p,\gamma_t(s))^2) \,
\sqrt{a_t(\gamma'_t,\gamma'_t)} \, \dd s.
\end{align*}
But as at time $t$ the geodesic $\gamma_t$ is minimizing and starts
from $p$, the function $\rho_t(p,\gamma_t(s))$ grows with $s$, and
hence  (in cases (i) and (ii) respectively)
\begin{align*}
D^+_t \rho_{t}(p,r)&\le g(1+\rho_t(p,r)) \int_{\gamma_t} \,
\sqrt{a_t(\gamma'_t,\gamma'_t)} \, \dd s=g(1+\rho_t(p,r))
\rho_t(p,r),\\
 D^+_t \rho_{t}(p,r)&\le g(1+\rho_t(p,r)^2)
\int_{\gamma_t} \, \sqrt{a_t(\gamma'_t,\gamma'_t)} \, \dd
s=g(1+\rho_t(p,r)^2) \rho_t(p,r).
\end{align*}
By  Lemma \ref{sop} we have for $\hat{t}\le \check{t}$  (in cases
(i) and (ii) respectively)
\begin{align}
\rho_{\check{t}}(p,r)-\rho_{\hat{t}}(p,r)&\le
\int_{\hat{t}}^{\check{t}} g(1+\rho_t(p,r)) \rho_t(p,r) \,\dd t,\\
\rho_{\check{t}}(p,r)-\rho_{\hat{t}}(p,r)&\le
\int_{\hat{t}}^{\check{t}} g(1+\rho_t(p,r)^2) \rho_t(p,r) \,\dd t.
\end{align}
Choosing $\hat{t}=t_{k}$, $\check{t}=t_{k+1}$, $r=q_k$ we obtain (in
cases (i) and (ii) respectively)
\begin{align*}
B_k &\le \int_{t_k}^{t_{k+1}} g(1+\rho_t(p,q_k)) \rho_t(p,q_k) \dd
t \le g(1+\rho_{t'}(p,q_k)) \rho_{t'}(p,q_k) \, \frac{1}{n} ,\\
B_k &\le \int_{t_k}^{t_{k+1}} g(1+\rho_t(p,q_k)^2) \rho_t(p,q_k) \dd
t \le g(1+\rho_{t'}(p,q_k)^2) \rho_{t'}(p,q_k) \, \frac{1}{n} ,
\end{align*}
for some (different) $t'\in [t_k,t_{k+1}]$. Finally, (in cases (i) and (ii)
respectively)
\begin{align*}
\sum_{k=0}^{n-1} B_k &\le \sum_{k=0}^{n-1} g(1+\rho_{t'}(p,q_k)) \rho_{t'}(p,q_k) \, \frac{1}{n} ,  \\
2\sum_{k=0}^{n-1} B_kC_k &\le 2 \sum_{k=0}^{n-1}
\rho_{\tilde{t}_k}(p,q(\tilde{t}_k)) g(1+\rho_{t'}(p,q_k)^2)
\rho_{t'}(p,q_k) \, \frac{1}{n} .
\end{align*}
Using the continuity and hence uniform continuity of
$g(1+\rho_{t'}(p,q_k)) \rho_{t'}(p,q_k) $ as a function of $t$ on
the compact set $[\underline{t},\overline{t}]$, or of
$\rho_{t}(p,q(t)) g(1+\rho_{t'}(p,q_k)^2) \rho_{t'}(p,q_k)$ as a
function of $(t,t')$ on the compact set
$[\underline{t},\overline{t}]^2$ we conclude that in the limit $n\to
\infty$ the right-hand sides of the previous inequalities converge
respectively to the Riemann integrals 
\[
\int_{\underline{t}}^{\overline{t}} [g(1+\rho_t(p,q(t)))\, \rho_t(p,
q(t))] \,\dd t \quad \textrm{and} \quad 2 \int_{\underline{t}}^{\overline{t}}
[g(1+\rho_t(p,q(t))^2)\, \rho_t(p, q(t))^2] \,\dd t.
\]
\end{proof}
%
%By the lemma for $t'>t$, $\rho_{t'}(r_1,r_2)- l_{t'}[\gamma]\le 0$
%whenever $\gamma$ minimizes the
%
%
%
%
%%\[
%%D^+_t \rho_{t}(r_1,r_2)^2\le 2 \rho_{t}(r_1,r_2)\partial_t
%%l_t[\gamma].
%%\]
%Let us consider the special case in which $r_1=p$, $r_2=q(t')$ and
%$\gamma_{t'}$ is a minimizing geodesic for $(Q,a_{t'})$ connecting
%the two points. We have $(D^+_t \rho_{t})(t')(p,q(t'))\le
%(\partial_t l_t)(t')[\gamma_{t'}]$ and
%$\rho_{t'}(p,q(t'))=l_{t'}[\gamma_{t'}]$. Thus by the lemma we have
%for $t''\ge t'$
%\[
%\rho_{t''}(p,q(t''))\le l_{t''}[\gamma_{t''}]\le
%\rho_{{t''}}(p,q({t'}))+\int_{t'}^{t''} \partial_s l_s[\gamma_{t'}]
%\dd s .
%\]
%
%
%Let $\bar{t}\in\mathbb{R}$ and suppose that  $r_1=p$,
%$r_2=q(\bar{t})$ and $\gamma$ is a minimizing geodesic for
%$(Q,a_{\bar t})$ which starts from $p$ and ends at $q(\bar{t})$.
%Observe that for $\rho_{\bar{t}}(p,q(\bar{t}))=
%l_t[\gamma](\bar{t})$, thus by the lemma, for $t'\ge \bar{t}$ we
%have
%\[
%\rho_{t'}(p,q(t'))\le l_{t'}[\gamma]\le
%\rho_{\bar{t}}(p,q(\bar{t}))+\int_{\bar{t}}^t \partial_s l_s[\gamma]
%\dd s .
%\]
%
%
%
%
%
%Using Eq.\ (\ref{biq}) and the fact that $\rho_t(p,\gamma(s))$ is
%increasing by the minimality of $\gamma$, we obtain (recall  that
%$l_t[\gamma]=\rho_t(p,q(t))$)
%\[
%\partial_t l_t[\gamma]\le \int_\eta g(1+\rho_t(p,\gamma(s))^2) \,
%\sqrt{a_t(\eta_s,\eta_s)} \, \dd s\le g(1+\rho_t(p,q(t))^2)
%\rho_t(p,q(t)).
%\]

\subsection{Generalization of Wintner's theorem to Riemannian
manifolds}

In this section we generalize Wintner's theorem to Riemannian spaces
with possibly time-dependent metrics.

Let us recall LaSalle-Bihari \cite{lasalle49,bihari56}
generalization of Gronwall's inequality. We give here a kind of
two-sided generalization.

\begin{theorem} \label{nuf}
Let $x\colon  I \to [0,+\infty)$ be a continuous function which satisfies
the inequality:
%\begin{equation} \label{xra}
%x (t) \le x(a) + \beta \vert \int_a^t
%    \Psi(s)\, \omega(x (s))\, \dd s\vert, \qquad a,t\in I
%\end{equation}
%or equivalently
\begin{equation} \label{xrb}
\pm [ x (t') -x(t)] \le \beta  \int_t^{t'}
    \Psi(s) \,\omega(x (s))\, \dd s, \qquad t,t'\in I, \, t<t',
\end{equation}
where $\beta>0$, $\Psi\colon \mathbb{R} \to [0,+\infty)$, and    $\omega\colon [0,+\infty) \to (0, +\infty)$ is continuous and non-decreasing.
Then we have the respective estimates
\begin{equation} \label{xrc}
\pm [\Phi(x(t'))- \Phi(x(t))]\le  \beta  \int_t^{t'}
    \Psi(s)\, \dd s, \qquad t,t'\in I, \, t<t',
\end{equation}
where $\Phi\colon \mathbb{R} \to \mathbb{R}$  is given by
\[
\Phi(u) := \int_{u_0}^u \frac{\dd s}{\omega(s)} , \qquad u \in
\mathbb{R}.
\]
%Finally,  the inequality (\ref{xra}) holds  in plus case for the
%absolute value (i.e.\ for $t>a$) iff the inequality (\ref{xrb}) holds in the plus case,
%in which case (\ref{xrc}) holds in the plus case (i.e.\ for $t>a$). Analogously, there
%is a correspondence for the minus cases.
\end{theorem}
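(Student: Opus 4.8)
The plan is to convert the integral inequality \eqref{xrb} into a pointwise bound on the right upper Dini derivative of the composite function $t\mapsto\Phi(x(t))$ and then to conclude with Lemma \ref{sop}. Since $\omega$ is continuous and strictly positive, $\Phi$ is $C^1$, strictly increasing, with $\Phi'=1/\omega$; in particular $t\mapsto\Phi(x(t))$ is continuous, as that lemma requires. It suffices to treat the upper ($+$) case, because the lower ($-$) case follows from it by the time-reversal substitution already used in the previous proposition: applying the $+$ result to $\tilde{x}(\tau):=x(-\tau)$ and $\tilde\Psi(\tau):=\Psi(-\tau)$ turns the $-$ hypothesis into a $+$ hypothesis and the $+$ conclusion back into the desired $-$ conclusion.

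Fix $t\in I$ and estimate $D^+\!\left[\Phi(x(t))\right]=\limsup_{h\to0^+}\frac1h\big[\Phi(x(t+h))-\Phi(x(t))\big]$. Here the monotonicity of $\omega$ enters. If $x(t+h)\ge x(t)$ then, since $\omega(u)\ge\omega(x(t))$ on the interval of integration, $\Phi(x(t+h))-\Phi(x(t))=\int_{x(t)}^{x(t+h)}\dd u/\omega(u)\le[x(t+h)-x(t)]/\omega(x(t))$; if instead $x(t+h)<x(t)$ the left-hand side is negative. Because $\Psi\ge0$, in either case
\[
\frac{\Phi(x(t+h))-\Phi(x(t))}{h}\le\frac{1}{\omega(x(t))}\,\frac{\beta}{h}\int_t^{t+h}\Psi(s)\,\omega(x(s))\,\dd s,
\]
where in the first case I used \eqref{xrb} to bound $[x(t+h)-x(t)]/h$. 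Letting $h\to0^+$ and using that $s\mapsto\Psi(s)\,\omega(x(s))$ is continuous, so that the averaged integral tends to its value at $s=t$, the right-hand side converges to $\beta\,\Psi(t)\,\omega(x(t))/\omega(x(t))=\beta\,\Psi(t)$. Hence $D^+[\Phi(x(t))]\le\beta\,\Psi(t)$.

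Now set $f(t):=\Phi(x(t))$ and $g(t):=\beta\int_{t_0}^{t}\Psi(s)\,\dd s$ for an arbitrary fixed $t_0\in I$; then $g$ is $C^1$ with $g'=\beta\Psi$, and the previous step is exactly $D^+f\le g'$. Lemma \ref{sop} makes $f-g$ non-increasing on $I$, that is $\Phi(x(t'))-\Phi(x(t))\le\beta\int_t^{t'}\Psi(s)\,\dd s$ for all $t<t'$ in $I$, which is the $+$ case of \eqref{xrc}; the base point $u_0$ is immaterial since only differences of $\Phi$ occur.

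The step I expect to be most delicate is the Dini estimate of the second paragraph: one must split according to the local sign of $x(t+h)-x(t)$ and invoke the monotonicity of $\omega$ with the correct orientation, and one must justify passing to the limit in the averaged integral. The latter is immediate when $\Psi$ is continuous, as in the intended applications, so that $\Psi\,\omega(x)$ is continuous; for a merely measurable $\Psi$ one would work at Lebesgue points of $\Psi$ and replace Lemma \ref{sop} by the more general statement quoted from \cite{amann90}, the term $g'$ being then only an $L^1$ majorant.
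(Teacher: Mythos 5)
Your proof is correct, and it does substantially more than the paper's own proof, which consists of a single sentence: the plus version of the inequality is taken for granted as the ``usual'' LaSalle--Bihari result (the references \cite{lasalle49,bihari56}), and only the reduction of the minus version to the plus version via $\tilde{x}(t)=x(-t)$ is recorded --- exactly the time-reversal substitution you also use. What you add is a self-contained proof of the plus version, and your route is a natural one in the context of this paper: you bound the right upper Dini derivative of $\Phi\circ x$ by $\beta\,\Psi$ (splitting on the sign of $x(t+h)-x(t)$ and using the monotonicity of $\omega$ with the correct orientation), and then invoke the paper's own Lemma \ref{sop}, i.e.\ the same comparison lemma the author uses for the distance estimates. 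The classical Bihari argument instead introduces the majorant $y(t')=x(t)+\beta\int_t^{t'}\Psi(s)\,\omega(x(s))\,\dd s$, differentiates $\Phi(y)$, and concludes from $x\le y$ and the monotonicity of $\Phi$; your Dini-derivative version avoids the auxiliary function at the cost of needing $\Psi\,\omega(x)$ continuous so that the averaged integral converges pointwise --- a restriction you correctly flag and which is harmless here, since in all of the paper's applications $\Psi$ is continuous (indeed constant). A minor point you also implicitly handle: $\Phi$ is really only defined on $[0,+\infty)$ (the statement's ``$u\in\mathbb{R}$'' is a slip), but since $x\ge 0$ and only differences of $\Phi$ along $x$ occur, this is immaterial.
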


%\begin{remark}
%The inequality \ref{xra} holds only in plus case for the absolute
%value, iff the same holds for \ref{xrb} and \ref{xrc}, and hence
%analogously in the minus case. W
%\end{remark}

\begin{proof}
The minus version can be obtained from the usual plus version by
considering the function $\tilde{x}(t)=x(-t)$, and applying the plus
version to $\tilde{x}$.
\end{proof}

%
%\begin{theorem}
%Let $x : [a, b] \to [0,+\infty)$ be a continuous function which
%satisfies the inequality:
%
%\[
%x (t) \le x(a) + \beta \int_a^t
%    \omega(x (s))\, \dd s, \qquad t\in [a, b]
%\]
%where    $\Psi: [a, b] \to [0,+\infty)$ is continuous and $\omega :
%[0,+\infty) \to (0, +\infty)$ is continuous and monotone-increasing.
%Then the estimation
%\[
%x (t)\le \Phi^{-1} \left( \Phi(M) + \beta (t-a)
% \right), \qquad t \in [a, b]
%\]
%holds, where $\Phi: \mathbb{R} \to \mathbb{R}$  is given by
%\[
%\Phi(u) := \int_{u_0}^u \frac{\dd s}{\omega(s)} , \qquad u \in
%\mathbb{R}.
%\]
%\end{theorem}

\begin{remark} \label{rem}
With $g$ or $g_r$ we denote a non-decreasing $C^0$ function
$g_r:[1,+\infty) \to [1,+\infty)$ with the property that the
increasing function $G_r\colon [1,+\infty) \to [0,+\infty)$
\[
G_r(y):=\int_1^y \frac{1}{x g_r(x)}\, \dd x ,
\]
diverges for $y\to +\infty$. The typical choice will be
$g=cnst.\ge1$, but there are choices that strengthen the next
theorems, e.g.\ $g=\ln(\eta+ x)$ or $g=\ln(\eta+ x) \ln(\eta+
\ln(\eta+ x))$ and so on,  where $\eta=e-1$.
\end{remark}

On first reading one can just consider the simple case $g_r=cnst.\ge
1$, $\partial_t a_t=0$. We are ready to  generalize  Wintner's
theorem to Riemannian manifolds.

\begin{theorem}
Let $(Q,a_t)$ be a 1-parameter family of complete Riemannian
manifolds as in Prop.\ \ref{buy}, and let $f:\mathbb{R}\times Q \to
TQ$ be a $C^0$ field. Let $p\in Q$ and let $R:  \mathbb{R}\times Q
\to [1,+\infty)$ be given  by
\[
R(t,q)=1+\rho_t(p,q).
\]
%\begin{itemize}
%\item[(a)]
Suppose that for every compact interval $[-r,r] \subset \mathbb{R}$
there is a function $g_r$ as in remark \ref{rem} such that for every
$(t,q)\in [-r,r]\times Q$
\begin{align}
\pm (\partial_t a_t)(v,v) &\le 2 g_r(R(t,q)) \,
a_t(v,v),\\
\Vert \nu(t,q)\Vert_t & \le g_r(R(t,q))\, {R(t,q)},
\end{align}
then the maximal solutions to the first order equation
\[
 \dot{q}=\nu(t,q(t)),
\]
are complete in the forward (resp.\ backward) direction.
%\item[(b)] Suppose that  there is a function $g$ as above  and a function $\Psi: \mathbb{R}\to (0,+\infty)$ such that for every
%$(t,q)\in [t_0,\infty)\times Q$
%\begin{align}
% (\partial_t a_t)(v,v) &\le 2  \Psi(t) \, g(R(t,q)) \,
%a_t(v,v),\\
%\Vert \nu(t,q)\Vert_t & \le \Psi(t) \, g(R(t,q))\, {R(t,q)},
%\end{align}
%and $\int_{t_0}^{+\infty} \Psi(s)\dd s<+\infty$, then the maximal
%solutions to the first order equation
%\[
% \dot{q}=\nu(t,q(t)),
%\]
%are complete in the forward direction and converge to some
%$q_{\infty}$ for $t\to +\infty$.
%\end{itemize}
\end{theorem}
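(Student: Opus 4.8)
The plan is to argue by contradiction. Assuming a forward-maximal solution defined only on a bounded interval, I would first show that its distance from the fixed point $p$ cannot blow up in finite time, and then invoke the properness statement of Prop.\ \ref{buy} together with an escape/extension argument to prolong the solution, contradicting maximality. The whole argument runs in the "$+$" direction; the backward statement follows from the "$-$" version of the inequalities via the time-reversal substitution $q'(t)=q(-t)$, $a'_t=a_{-t}$, exactly as in the earlier proofs.

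Suppose then that $q:[0,T)\to Q$ is a maximal solution of $\dot q=\nu(t,q)$ with $T<+\infty$. Fix $r>T$ so that $[0,T)\subset[-r,r]$ and let $g_r$ be the function furnished by the hypotheses. Set $x(t)=R(t,q(t))=1+\rho_t(p,q(t))$, which by Prop.\ \ref{buy} is continuous and takes values in $[1,+\infty)$. The first hypothesis is precisely the bound (\ref{biq}) with $g=g_r$ (noting $g_r(1+\rho_t(p,q))=g_r(R(t,q))$), so I may apply the distance inequality (\ref{erx}) to the $C^1$ curve $q$. Along the solution the velocity bound gives $\Vert\dot q\Vert_t=\Vert\nu(t,q(t))\Vert_t\le g_r(x(t))\,x(t)$, while $g_r(R)\,\rho_t(p,q(t))\le g_r(x(t))\,x(t)$ because $\rho_t\le R=x$; hence for $0\le t\le t'<T$,
\[
x(t')-x(t)\le 2\int_t^{t'} g_r(x(s))\,x(s)\,\dd s .
\]

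This is exactly the hypothesis (\ref{xrb}) of Theorem \ref{nuf} with $\beta=2$, $\Psi\equiv 1$, and $\omega(u)=u\,g_r(u)$, which is positive, continuous and non-decreasing on $[1,+\infty)$ (extend it arbitrarily below $1$, where $x$ never goes). The conclusion (\ref{xrc}), with $\Phi=G_r$ obtained by taking $u_0=1$, then yields
\[
G_r(x(t'))-G_r(x(t))\le 2(t'-t)\le 2T .
\]
Since $G_r$ is increasing and, by the defining property in remark \ref{rem}, $G_r(y)\to+\infty$ as $y\to+\infty$, the estimate $G_r(x(t))\le G_r(x(0))+2T$ forces $x(t)$, and therefore $\rho_t(p,q(t))$, to stay bounded on $[0,T)$, say $x(t)\le M$. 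Consequently $\vert t\vert+R(t,q(t))\le T+M$, so $(t,q(t))$ lies in a sublevel set of the proper function of Prop.\ \ref{buy}, hence in a compact set, and $q([0,T))$ is contained in a compact $K\subset Q$.

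The remaining step, which I expect to be the most delicate, is converting this confinement into an actual extension, because $\nu$ is only $C^0$ (so solutions need not be unique) and the metric is time-dependent. On the compact set $[0,T]\times K$ the field $\nu$ is bounded and the metrics $a_t$ are uniformly equivalent, so $\Vert\dot q\Vert_t$ is bounded in a single fixed complete reference metric; thus $q$ is uniformly continuous and $\lim_{t\to T^-}q(t)=:q_T$ exists. Peano's existence theorem applied at the initial datum $(T,q_T)$ then extends the solution beyond $T$, contradicting maximality. The fusion of the distance inequality with the Bihari estimate is the conceptual core and is essentially mechanical once the bound is written in the form (\ref{xrb}); it is the passage to a genuine limit point $q_T$ under mere continuity and a moving metric that requires the extra care outlined above.
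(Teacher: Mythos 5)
Your proposal is correct and follows essentially the same route as the paper: contradiction via a bounded maximal interval, the velocity bound fed into the distance inequality (\ref{erx}) to produce the integral inequality for $R(t,q(t))$, the LaSalle--Bihari estimate (Theorem \ref{nuf}) with $\omega(u)=u\,g_r(u)$ to bound $R$, and properness to confine the solution to a compact set. The only difference is that you spell out the final continuation step (uniform bound on $\Vert\dot q\Vert$, existence of the limit $q_T$, Peano extension), which the paper delegates to the cited continuation theorem of Hartman; this is a legitimate and complete way to finish.
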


\begin{remark}
Since we do not need the uniqueness of the solution, we just ask
$\nu$ to be continuous rather than Lipschitz \cite{hale80}. The
existence of some maximal solutions is assured by \cite[Theor.\
3.1]{hartman64}. The theorem can be easily generalized to
differential inclusions, i.e.\ to the case in which $\nu$ is a lower
semi-continuous set valued mapping and $\nu(t,q)$ is a convex set of
$T_qQ$ for each $(t,q)$. In this case the existence of some maximal
solutions is assured by \cite[Theor.\ 1, Chap.\ 2]{aubin84}.
\end{remark}

\begin{proof}
%Proof of (a).
Let the initial condition be ${q}(t_0)=q_0$, and suppose by
contradiction that $q(t)$ is a maximal solution whose interval of
definition $I$ is bounded on the right, i.e.\ $I\subset (-\infty,B]$
(resp.\ on the left $I \subset [-B,+\infty)$) for some $B>0$. Let us
observe that
\[
\Vert \dot{q}\Vert_t=\Vert \nu\Vert_t\le g_B(R(t,q(t)))\,
{R(t,q(t))},
\]
thus plugged in Eq.\ (\ref{erx}) with $g=g_B$, we obtain for every
$t_1,t_2\in I$, $t_1\le t_2$,
\begin{equation}
\pm [R({t_2}, q({t_2}))- R({t_1}, q({t_1}))]\le 2\!\!
\int_{{t_1}}^{{t_2}}\! g_B(R(s,q(s)))\, {R(s,q(s))} \,\dd s.
\end{equation}
%We are going to prove that if $t\in I$, $t> t_0$,
%\begin{equation} \label{kct}
%R(t,q(t))\le \alpha+\beta\int_{t_0}^t g_B(R(s,q(s)))  \, R(s,q(s))
%\,\dd s.
%\end{equation}
%for  constants $\alpha=R(t_0,q(t_0))$, $\beta>0$.
Thus,   by LaSalle-Bihari generalization of Gronwall's inequality
(Theor.\ \ref{nuf}) we have for $t\in I$, $t>t_0$ (resp.\ $t<t_0$)
\begin{align*}
R(t,q(t)) &\le  G_B^{-1}(G_B(R(t_0,q(t_0)))+ 2 \vert t-t_0\vert) \\
&\le G_B^{-1}(G_B(R(t_0,q(t_0)))+ 4 B)<+\infty.
\end{align*}
As $R$ is proper and $\vert t\vert$ is bounded by $B$ in the forward (resp. backward) direction, $q(t)$ cannot
escape every compact set and hence it must be complete in that direction \cite[Theor.\
3.1, Chap.\ II]{hartman64}, a contradiction.
%Similarly
%we are going to prove that if $t\in I$, $t< t_0$,
%\begin{equation} \label{kct}
%R(t,q(t))\ge \alpha-\beta\int_{t}^{t_0} g_B(R(s,q(s)))  \, R(s,q(s))
%\,\dd s.
%\end{equation}
%for (the same) constants $\alpha,\beta>0$. A similar argument shows
%that  $q(t)$ cannot escape every compact set in the past direction
%and hence it must be complete in the past direction, a
%contradiction.
% In order to prove the inequality (\ref{kct}),
\end{proof}

\section{The completeness of maximal solutions to second order
ODEs}

The following notation will simplify the statement of the theorem.
In this section the functions $g_r$ and $G_r$ are defined as in the
previous section.
 Let $F_t\colon  Q \to \bigwedge^2 Q $ be a time dependent 2-form
field. By  $F_t^\sharp\colon TQ\to TQ$ we denote  the corresponding
endomorphism of the tangent space defined by
$F_t^\sharp(v)=a_t^{-1}(F_t(\cdot,v))$. On first reading one can just
consider the simple case $g_r=cnst.\ge 1$, $F_t=0$, $\partial_t
a_t=0$.
%With $g$ or $g_r$ we denote a non-decreasing $C^0$ function
%$g_r~:~[1,+\infty) \to [1,+\infty)$ with the property that the
%increasing function $G_r:[1,+\infty) \to [0,+\infty)$
%\[
%G_r(y):=\int_1^y \frac{1}{x g_r(x)}\, \dd x
%\]
%diverges for $y\to +\infty$. The typical choice will be
%$g=cnst.\ge1$, but there are choices that strengthen the next
%theorem, e.g.\ $g=\ln(\eta+ x)$ or $g=\ln(\eta+ x) \ln(\eta+
%\ln(\eta+ x))$ and so on,  where $\eta=e-1$. On first reading one
%can just consider the simple case $g_r=cnst.\ge 1$, $F_t^\sharp=0$,
%$\partial_t a_t=0$.
We are ready to state the theorem.

\begin{theorem} \label{bso}
Let $(Q,a_t)$ be a 1-parameter family of complete Riemannian
manifolds as in Prop.\ \ref{buy}, and let $f\colon \mathbb{R}\times TQ \to
TQ$ be a $C^0$ field. Let $p\in Q$ and let $E\colon   \mathbb{R}\times TQ
\to [1,+\infty)$ be given as above by
\[
E(t,q,v)=1+\Vert v \Vert_t^2+ \rho_t(p,q)^2.
\]
Suppose that for every compact interval $[-r,r] \subset \mathbb{R}$
there are a constant $K(r)>0$, a function $g_r$ as in remark
\ref{rem}, and a continuous 2-form field  $F_t$,
 such that for every $(t,q,v)\in [-r,r]\times TQ$
\begin{align}
\pm [ (\partial_t a_t)(v,v) ]&\le 2 g_r(1+\rho_t(p,q)^2) \,
a_t(v,v),\\
\Vert f(t,q, v)-F^\sharp_t(v) \Vert_t & \le
g_r(E(t,q,v))\,\frac{E(t,q,v)}{K+\Vert v \Vert_t}. \label{bis}
\end{align}
Then the maximal solutions to the second order equation
\[
\frac{D^{(t)} }{\dd t}\, \dot{q}=f(t,q(t), \dot{q}(t)),
\]
are complete in the forward  (resp.\ backward) direction.

\end{theorem}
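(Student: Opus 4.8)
The plan is to follow the first-order theorem but now control the proper function $E$ evaluated along the solution. Fix the initial condition $q(t_0)=q_0$ and suppose, by contradiction, that a maximal solution $q$ has interval of definition bounded on the right, say contained in $(-\infty,B]$; choose $r$ so that all relevant times lie in $[-r,r]$ and abbreviate $g_B:=g_r$, $G_B:=G_r$. Since $f$ is continuous the solution is $C^2$, so $t\mapsto E(t):=E(t,q(t),\dot q(t))=1+\Vert\dot q\Vert_t^2+\rho_t(p,q(t))^2$ is continuous, and the goal is an integral inequality $\pm[E(t_2)-E(t_1)]\le\beta\int_{t_1}^{t_2}g_B(E(s))\,E(s)\,\dd s$ to which Theorem \ref{nuf} applies with $\Psi\equiv1$ and $\omega(u)=u\,g_B(u)$.

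The two summands of $E$ are handled separately. For the distance term the hypothesis on $\partial_t a_t$ is precisely (\ref{biq2}), so part (ii) of the preceding proposition yields (\ref{ers}) at once; this is what absorbs the non-differentiability of $\rho_t^2$ at cut points. In the integrand of (\ref{ers}) I bound $\rho_t\Vert\dot q\Vert_t\le\frac12(\rho_t^2+\Vert\dot q\Vert_t^2)\le\frac12 E$ and $g_B(1+\rho_t^2)\,\rho_t^2\le g_B(E)\,E$, using that $g_B\ge1$ is non-decreasing and that $1+\rho_t^2\le E$, $\rho_t^2\le E$, $\Vert\dot q\Vert_t^2\le E$; hence the distance part contributes a multiple of $\int g_B(E)\,E\,\dd s$.

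For the kinetic term $\Vert\dot q\Vert_t^2=a_t(\dot q,\dot q)$, which is genuinely $C^1$ along the solution, I differentiate using the metric compatibility of $D^{(t)}$ at fixed $t$:
\[
\frac{\dd}{\dd t}\Vert\dot q\Vert_t^2=(\partial_t a_t)(\dot q,\dot q)+2\,a_t\big(\tfrac{D^{(t)}}{\dd t}\dot q,\dot q\big)=(\partial_t a_t)(\dot q,\dot q)+2\,a_t(f,\dot q),
\]
the last step by the equation of motion. The crucial observation is that, since $F_t$ is a $2$-form, $F_t^\sharp$ is $a_t$-antisymmetric, so $a_t(F_t^\sharp(\dot q),\dot q)=F_t(\dot q,\dot q)=0$: the ``magnetic'' part does no work, and $2\,a_t(f,\dot q)=2\,a_t(f-F_t^\sharp\dot q,\dot q)$. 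Cauchy--Schwarz together with (\ref{bis}) then gives $\pm2\,a_t(f-F_t^\sharp\dot q,\dot q)\le 2g_B(E)\,\frac{E}{K+\Vert\dot q\Vert_t}\,\Vert\dot q\Vert_t\le 2g_B(E)\,E$, where the factor $\Vert\dot q\Vert_t/(K+\Vert\dot q\Vert_t)\le1$ exactly compensates the (essentially linear in the velocity) growth allowed by (\ref{bis}). Combined with $\pm(\partial_t a_t)(\dot q,\dot q)\le 2g_B(E)\,E$, integration gives $\pm[\Vert\dot q(t_2)\Vert_{t_2}^2-\Vert\dot q(t_1)\Vert_{t_1}^2]\le 4\int_{t_1}^{t_2}g_B(E)\,E\,\dd s$.

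Adding the two estimates produces the sought inequality for $E(t)$ with a universal constant $\beta$. Applying Theorem \ref{nuf} with $\Phi=G_B$ and using (Remark \ref{rem}) that $G_B$ diverges at $+\infty$, I obtain $E(t)\le G_B^{-1}(G_B(E(t_0))+2\beta B)<+\infty$ throughout the interval of definition. Since $|t|\le B$ is bounded and $|t|+E$ is proper (Prop.\ \ref{buy}), the curve $t\mapsto(t,q(t),\dot q(t))$ cannot leave every compact set of $\mathbb{R}\times TQ$; regarding the second-order equation as a first-order flow on $TQ$, this contradicts the maximality of a solution with bounded interval \cite[Theor.\ 3.1, Chap.\ II]{hartman64}. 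The backward direction follows from the $\pm$ formulation (equivalently, the time reversal $q'(t)=q(-t)$, $a'_t=a_{-t}$). I expect the kinetic estimate to be the main obstacle: the key is recognizing that the antisymmetry of $F_t^\sharp$ annihilates its contribution to the energy, which is exactly what allows (\ref{bis}) to tolerate a force growing linearly in the velocity without spoiling completeness.
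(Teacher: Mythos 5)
Your proposal is correct and follows essentially the same route as the paper: decompose $E$ into the squared-distance part (handled via inequality (\ref{ers}) of part (ii) of the distance proposition) and the kinetic part (handled by differentiating $a_t(\dot q,\dot q)$ along the solution, cancelling the $F_t^\sharp$ contribution by antisymmetry, and applying Cauchy--Schwarz with the bound (\ref{bis})), then closing with the LaSalle--Bihari inequality and the properness of $\vert t\vert+E$. The only cosmetic difference is that you bound each term of the integrand directly, whereas the paper packages the same estimates into the functions $\Lambda$ and $\Omega$ and takes a supremum to exhibit the constant $\beta$.
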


\begin{remark}
The inequality (\ref{bis}) mentions  $F^\sharp_t$ in order to
clarify that the inclusion of force field components of
`electromagnetic' or `Coriolis' type can only enlarge  the domain of
the maximal solutions (see also \cite{candela13}). Actually, the
left-hand side of Eq.\ (\ref{bis})  could be replaced, just in the
study of forward completeness, by $\Vert f(t,q,
v)-F^\sharp_t(v)+h(t,q,v) \, v \Vert_t$ where $h: \mathbb{R}\times
TQ \to [0,+\infty)$. This can be easily understood from inspection
of Eq.\ (\ref{bkd}). This fact proves that the introduction of
force components which represent friction forces proportional to the
velocity can only enlarge the domain of the maximal solutions and
hence can only make it easier to attain forward completeness, see
also \cite{weinstein70}. However, these maximal solutions could be
incomplete in the backward direction.
\end{remark}

\begin{remark}
Clearly, the assumptions of the theorem are satisfied if $\partial_t
a_t=0$ and for each $s$ there are positive constants $K_0(r),
K_1(r), K_2(r)$, $K_3(r), K_4(r),$ such that for $(t,q,v)\in
[-r,r]\times TQ$
\[
\Vert f(t,q, v)\Vert \le K_0+K_1 \rho(p,q) +K_2 \Vert v \Vert + K_3
\frac{\rho(p,q)^2}{K_4 + \Vert v \Vert }.
\]
It suffices to choose  $g_r$ to be a sufficiently large constant and
$K_4=K$. Thus, in this case the second order ODE is complete. Of
course the asymptotic behavior of $f$ could  be faster than linear
for instance of the form $\sim \rho \ln(\eta+\rho)$, it suffices to
consider a non-trivial $g_r$. Observe that we did not impose any
type of dependence of $f$ on $v$, e.g.\ linear, as in previous
works, thus $f(v)$ could be quite general. Also we obtain a new type
of sufficient asymptotic bound, expressed by the last term of the
previous expression, which has been noticed here for the first time.
\end{remark}

\begin{remark}
As in the previous section, since we do not need the uniqueness of
the solution, we ask $f$ to be just  continuous rather than
Lipschitz. The existence of some maximal solutions is assured by
\cite[Theor.\ 3.1]{hartman64}. Theorem \ref{bso} can be easily
generalized to differential inclusions, i.e.\ to the case in which
$f$ is a lower semi-continuous set valued mapping and $f(t,q,v)$ is
a convex set of $TQ$ for each $(t,q,v)$. In this case the existence
of some maximal solutions is assured by \cite[Theor.\ 1, Chap.\
2]{aubin84}.
\end{remark}

\begin{proof}
%The function $E: \mathbb{R}\times TQ \to [1,+\infty)$, given by
%\[
%E(t,q,v)=1+a(v,v)+\rho(p,q)^2,
%\]
%is proper by the Hopf-Rinow theorem.
Let the initial condition be ${q}(t_0)=q_0$, $\dot{q}(t_0)=v_0$ and
suppose by contradiction that $q(t)$ is a maximal solution whose
interval of definition $I$ is bounded from above, i.e.\ $I\subset
(-\infty,B]$ for some $B>0$ (resp.\ from below, i.e.\ $I\subset [-B,
+\infty)$ for some $B>0$). Let us consider the curve for $t\in
\hat{I}=[-B,B]\cap I$; we can take $B$ sufficiently large so that
$t_0\in [-B,B]$. We are going to prove that if $t_1, t_2\in \hat I$,
$t_1<t_2$,
\begin{align}
\pm [ E(t_2,q(t_2),\dot{q}(t_2))-&E(t_1,q(t_1),\dot{q}(t_1))] \le \nonumber\\
& \le \beta\int_{t_1}^{t_2} g_B(E(s,q(s),\dot{q}(s)))  \,
E(s,q(s),\dot{q}(s)) \,\dd s. \label{kct}
\end{align}
for  a constant $\beta>0$. Thus, defined $E_0=
E(t_0,q(t_0),\dot{q}(t_0))$, by LaSalle-Bihari generalization of
Gronwall's inequality (Theor.\ \ref{nuf}) we have for $t\in \hat I$,
$t>t_0$ (resp.\ $t<t_0$)
\[
E(t,q(t),\dot{q}(t))\le  G_B^{-1}(G_B(E_0)+ \beta \vert t-t_0\vert)
\le G_B^{-1}(G_B(E_0)+ \beta 2B)<+\infty.
\]
As a consequence, as $F(t,q(t),\dot q(t))=\vert t\vert+E$ and $F$ is
proper by Prop.\ \ref{buy}, $(q(t),\dot{q})$ cannot escape every
compact set in the forward (resp.\ backward) direction and hence
this solution must be complete in that direction \cite[Theor.\ 3.1,
Chap. II]{hartman64}, a contradiction. In order to prove the
inequality (\ref{kct}), let us observe that
\begin{align}
\pm \frac{\dd }{\dd t}\, a_t(\dot q, \dot q)&= \pm 2  a_t(\dot q,
\frac{D^{(t)} \dot q}{\dd t})\pm (\partial_t a_t)(\dot q, \dot q)=
\pm 2 a_t(\dot q,
f)\pm (\partial_t a_t)(\dot q, \dot q)  \nonumber \\
&\le \pm 2 a_t(\dot q,
f-F^\sharp_t(\dot{q}))+2g_B(1+\rho_t(p,q(t))^2)
\Vert \dot{q} \Vert_t^2 \label{bkd} \\
&\le 2 \, \Vert \dot q\Vert_t \, \Vert f-F^\sharp_t(\dot{q}) \Vert_t
+2 g_B(1+\rho_t(p,q(t))^2) \Vert \dot{q} \Vert_t^2, \nonumber
\end{align}
which once integrated gives for $t_1,t_2\in I$, $t_1<t_2$,
\[
\pm [ a_{t_2}(\dot q, \dot q)(t_2)-a_{t_1}(\dot q, \dot q)(t_1)] \le
2 \int_{t_1}^{t_2} \{\Vert \dot q\Vert_s  \, \Vert f
-F^\sharp_s(\dot{q})\Vert_s+g_B(1+\rho_s(p,q(s))^2) \Vert \dot{q}
\Vert_s^2\} \, \dd s .
\]
Summing it to Eq. (\ref{ers}) where we make the choice $g= g_B$,
$\overline{t}=t_2$, $\underline{t}=t_1$, we obtain
\begin{align*}
\pm [ E(t_2,q(t_2),\dot{q}(t_2))-E(t_1,q(t_1),\dot{q}(t_1))] &\le 2
\int_{t_1}^{t_2} \Lambda(s) \, g_B(E) E\, \dd s,
\end{align*}
where we shortened the notation  introducing the function
\[
\Lambda(s)=\frac{ x(s) \Vert \dot q\Vert_s +g_B(1+x(s)^2)\, x(s)^2 +
\Vert f-F^\sharp_s(\dot{q}) \Vert_s\, \Vert \dot q\Vert_s
+g_B(1+x(s)^2) \Vert \dot{q} \Vert_s^2 }{g_B(E(q(s),\dot{q}(s)))\,
E(s,q(s),\dot{q}(s))},
\]
where $x(s):=\rho_s(p,q(s))$. Let us also define
\[
\Omega(x,y)=\frac{x y+x^2 g_B(1+x^2) + (1+x^2+y^2) \frac{y}{K+y}\,
g_B(1+x^2+y^2)
 +g_B(1+x^2) y^2}{(1+x^2+y^2) g_B(1+x^2+y^2)},
\]
and
\[
\beta =\sup_{x,y>0} 2\Omega(x,y)\le 6 <+\infty.
\]
 The bound on $f$ implies
$\Lambda(s)\le \Omega(\rho_s(p,q(s)), \Vert \dot q\Vert_s)$ thus
\begin{align*}
\pm [ E(t_2,q(t_2),\dot{q}(t_2))-&E(t_1,q(t_1),\dot{q}(t_1))]\le 2 \int_{t_1}^{t_2} \Omega(\rho_s(p,q(s)), \Vert \dot q\Vert_s) \, g_B(E) E\, \dd s\\
&\le \beta \int_{t_1}^{t_2}  \,  g_B(E(s,q(s),\dot{q}(s)))\,
E(s,q(s),\dot{q}(s)) \, \dd s.
\end{align*}
\end{proof}
%
%%\begin{align*}
%%E(t,q(t),\dot{q}(t))&\le \alpha +2 \int_{t_0}^t \Lambda(s) \, g_B(E) E\, \dd s\\
%%&\le \alpha+\beta \int_{t_0}^t  \,  g_B(E(q(s),\dot{q}(s)))\,
%%E(s,q(s),\dot{q}(s)) \, \dd s,
%%\end{align*}
%where $\alpha=1+ \rho_{t_0}(p,q(t_0))^2+\Vert \dot q \Vert_{t_0}^2$,
%and shortening $\zeta(s):=\rho_s(p,q(s))$, we have denoted
%\[
%\Lambda(s)=\frac{ \zeta(s) \Vert \dot q\Vert_s +g_B(1+\zeta(s)^2)\,
%\zeta(s)^2  + \Vert f-F^\sharp_s(\dot{q}) \Vert_s\, \Vert \dot
%q\Vert_s +g_B(1+\zeta(s)^2) \Vert \dot{q} \Vert_s^2
%}{g_B(E(q(s),\dot{q}(s)))\, E(s,q(s),\dot{q}(s))},
%\]
%%\[
%%\Lambda(s)=\frac{ \rho_s(p,q(s)) \Vert \dot q\Vert_s
%%+g(1+\rho_s(p,q(s))^2)\, \rho_s(p, q(s))^2  + \Vert
%%f-F^\sharp_s(\dot{q}) \Vert_s\, \Vert \dot q\Vert_s
%%+g_B(1+\rho_s(p,q(s))^2) \Vert \dot{q} \Vert_s^2   }{g_B(E)E}
%%\]
%and finally
%\begin{align*}
%\beta=&\sup_{q,v} \sup_{s\in [-B,B]} 2 \frac{[\rho(p,q) + \Vert
%f(s,q,v)-F^\sharp_t(v) \Vert] \, \Vert v \Vert }{g_B(E(q,v))E(q,v)}
%\\ &\le \sup_{q,v} \frac{2[\rho(p,q) \Vert v \Vert+
%\frac{g_B(E(q,v))}{K+\Vert v \Vert}\, E(q,v) \Vert v \Vert]
%}{g_B(E(q,v))E(q,v)} \le \frac{1}{c}+2<+\infty,
%%\\
%%&\le \sup_{q,v} \{1+ \frac{2 d(p,q) \Vert v
%%\Vert}{g_B(E(q,v))E(q,v)}\} \le 1+\frac{1}{c}<+\infty
%\end{align*}
%where $c>0$ is the lower bound for $g_B$.
%\sup_{q,v} \frac{2[d(p,q) + K_0(B)+K_1(B) d(p,q) +K_2(B) \Vert v
%\Vert + K_3(B) \frac{d(p,q)^2}{K_4(B) + \Vert v \Vert }] \, \Vert
%v \Vert }{E(q,v)}\\
%&\le 1+K_0(B)+K_1(B)+2K_2(B)+2K_3(B)<+\infty.
%\end{align*}

\begin{example} \label{cxe}
It is useful to study the differential equation on $\mathbb{R}$
given by \[\ddot q=\beta \frac{\dot{q}^2}{\alpha+q},\]
$\alpha,\beta>0$, also because of the analogy between this type of
force and the type of limiting behavior of  form
$K_3\frac{{q}^2}{K_4+\vert\dot{q}\vert}$ allowed by theorem
\ref{bso}.
 It is easy to check that for $\beta= 1$ it admits
complete solutions of the form $q(t)=c_2 \exp(c_1 t)-\alpha$, and
hence with any prescribed initial position $(q_0\ne -\alpha)$ and
velocity, while for $\beta=2$, the solutions are
$q(t)=\frac{c_1}{t-c_2}-\alpha$ and hence the equation admits a future
complete solution (i.e.\ $c_2<0$) only if $q(0)+\alpha$ and $\dot{q}(0)$
have opposite signs (consistently with \cite{hartman51}). This
example shows that if the force is quadratic in the velocity, we
need further assumptions in order to establish completeness, even if
the proportionality constant expressing such dependence is inversely
proportional to the distance.
\end{example}

\subsection{Application to Lagrangian mechanics and gravitational
 waves} \label{app}

Let ${Q}$ be a $d$-dimensional manifold (the space) endowed with the
(possibly time dependent) positive definite metric $a_t$, 1-form
field $b_t$ and potential function $V(t,q)$ (all $C^r$, $r \ge 2$).
On the classical spacetime $E=T \times {Q} $, where $T$ is a connected interval
of the real line, let $t$ be the time coordinate and let
$e_0=(t_0,q_0)$ and $e_1=(t_1,q_1)$ be events, the latter in the
future of the former i.e. $t_1>t_0$. Consider the action functional
of classical mechanics
\begin{equation} \label{cxs}
\mathcal{S}_{e_0,e_1}[q]=\int_{t_0}^{t_1} L(t,q(t), \dot{q}(t))
 \dd t,
\end{equation}
where
\begin{equation} \label{clas}
L(t,q, v)=\frac{1}{2}\, a_t(v,v) +b_t(v)-V(t,q),
\end{equation}
 on the space $C^1_{e_0,e_1}$ of $C^1$ curves $q\colon[t_0,t_1] \to
{Q}$ with fixed endpoints $q(t_0)=q_0$, $q(t_1)=q_1$.  Let $F_t:=d
b_t$, where $d$ is the exterior differentiation on $Q$ (thus $d$
does not differentiate with respect to $t$), that is, $F_t$ is the 2-form
whose components in local coordinates are $F_{t\, ij}=\partial_i
b_{t, j}-\partial_j b_{t, i}$. The $C^1$ stationary points are
smoother than the Lagrangian (namely $C^{r+1}$, see \cite[Theor.
1.2.4]{jost98}). By  Hamilton's principle, they solve the
Euler-Lagrange equation (e.g.\ \cite{eisenhart29} \cite[Eq.\
(2-39)]{greenwood77})
\begin{equation} \label{ele}
a_t(\cdot, \frac{D^{(t)}}{\dd t}\, \dot{q})= F_t(\cdot,
\dot{q})-(\p_t a_t)(\cdot,\dot{q})-( \p_t b_t+\p_q V),
\end{equation}
where,  as in previous sections, we denoted with $D^{(t)}$ the
affine connection of $a_t$ at the given time.

Historically this has proved to be one of the most important
variational problems because the mechanical systems of particles
subject to (possibly time dependent) holonomic constraints move
according to Hamilton's principle with a Lagrangian given by
(\ref{clas}) (see \cite{goldstein65}).

Theorem \ref{bso} allows us to establish the completeness of the
maximal solutions to the above Euler-Lagrange equations.

\begin{corollary} \label{bix}
Let ${Q}$ be a $d$-dimensional manifold (the space) endowed with the
(possibly time dependent) positive definite metric $a_t$, 1-form
field $b_t$ and potential function $V(t,q)$ (all $C^r$, $r \ge 2$).
Let us suppose that $(Q,a_t)$ are complete for each $t$ and let
 $\rho_t(p,q)$ be the corresponding Riemannian distance. Let us fix  $p\in Q$ and let us suppose that for every time interval $[-r,r]$ we can
find a continuous non-decreasing  function $g_r\colon [1,+\infty) \to
[1,+\infty)$ with the property that
\[
G_r(y):=\int_1^y \frac{1}{x g_r(x)}\, \dd x ,
\]
diverges for $y\to +\infty$ (the typical choice will be
$g_r=cnst.\ge 1$) and such that for every $(t,q,v)\in [-r,r]\times
TQ$
\begin{align}
\pm[ (\partial_t a_t)(v,v) ]&\le 2 g_r(1+\rho_t(p,q)^2) \,
a_t(v,v),\\
\Vert (\p_t b_t)(t,q) \Vert_t,\, \Vert (\p_q V)(t,q)\Vert_t &\le
g_r(1+\rho_t(p,q)^2) [1+\rho_t(p,q)],
 %\Vert f(t,q, v)-F^\sharp_t(v)\Vert_t &
%\le g_r(E(t,q,v))\,\frac{E(t,q,v)}{K+\Vert v \Vert_t},
\end{align}
 then the E.-L. flow is complete in the forward (resp.\ backward) direction.
\end{corollary}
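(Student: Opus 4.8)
The plan is to recognize that the Euler--Lagrange equation (\ref{ele}) is, after raising an index with $a_t^{-1}$, exactly a second order ODE of the type treated in Theorem \ref{bso}, and then to show that the hypotheses assumed here imply those of that theorem. Setting $f(t,q,v)$ equal to $a_t^{-1}$ applied to the whole right-hand side of (\ref{ele}), and using $F_t:=db_t$ together with the defining identity $a_t(\cdot,F_t^\sharp(v))=F_t(\cdot,v)$, the term $F_t(\cdot,\dot q)$ in (\ref{ele}) becomes precisely $F_t^\sharp(\dot q)$. Hence
\[
f(t,q,v)-F_t^\sharp(v)=-a_t^{-1}\big((\p_t a_t)(\cdot,v)\big)-a_t^{-1}(\p_t b_t)-a_t^{-1}(\p_q V),
\]
and the time-derivative bound on $a_t$ required by Theorem \ref{bso} coincides verbatim with the one assumed in the corollary. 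So everything reduces to estimating $\Vert f-F_t^\sharp(v)\Vert_t$ and checking it against (\ref{bis}).

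First I would bound the three summands separately. Since the musical isomorphism is a fibrewise isometry, $\Vert a_t^{-1}(\p_t b_t)\Vert_t=\Vert \p_t b_t\Vert_t$ and $\Vert a_t^{-1}(\p_q V)\Vert_t=\Vert \p_q V\Vert_t$, both of which are bounded by $g_r(1+\rho_t(p,q)^2)[1+\rho_t(p,q)]$ by hypothesis. For the first summand I would convert the diagonal quadratic-form bound on $\p_t a_t$ into a mixed bound: defining the $a_t$-self-adjoint operator $S$ by $a_t(Su,v)=(\p_t a_t)(u,v)$, the assumption $\pm(\p_t a_t)(v,v)\le 2g_r(1+\rho_t(p,q)^2)\,a_t(v,v)$ says exactly that the $a_t$-operator norm of $S$ is at most $2g_r(1+\rho_t(p,q)^2)$. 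Self-adjointness gives $a_t^{-1}\big((\p_t a_t)(\cdot,v)\big)=Sv$, whence $\Vert a_t^{-1}\big((\p_t a_t)(\cdot,v)\big)\Vert_t=\Vert Sv\Vert_t\le 2g_r(1+\rho_t(p,q)^2)\,\Vert v\Vert_t$.

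Collecting the estimates and abbreviating $x=\rho_t(p,q)$, $y=\Vert v\Vert_t$, I obtain $\Vert f-F_t^\sharp(v)\Vert_t\le 2g_r(1+x^2)\,[\,y+1+x\,]$. Since $E(t,q,v)=1+x^2+y^2$ and $g_r$ is non-decreasing, $g_r(1+x^2)\le g_r(E)$; it then remains to verify the purely algebraic inequality $(y+1+x)(K+y)\le C\,E$ for a constant $C$ depending only on $K$. This is routine: each of the monomials $Ky,\,y^2,\,K,\,y,\,Kx,\,xy$ is bounded by a fixed multiple of $1+x^2+y^2$ via $2ab\le a^2+b^2$ and $E\ge1$, and it yields $\Vert f-F_t^\sharp(v)\Vert_t\le 2C\,g_r(E)\,E/(K+y)$. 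Taking for the theorem the constant $K=1$, the same 2-form $F_t=db_t$, and the control function $\bar g_r:=2C\,g_r$ (which is again of the type in Remark \ref{rem}, since multiplying by a constant $\ge1$ preserves both $\bar g_r\ge1$ and the divergence of the integral defining $G_r$), all the hypotheses of Theorem \ref{bso} are met and completeness follows.

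The step I expect to be the crux is the passage from the diagonal bound $\pm(\p_t a_t)(v,v)\le 2g_r\,a_t(v,v)$ to the operator-norm bound on $S$, because only this makes the velocity-linear term $a_t^{-1}\big((\p_t a_t)(\cdot,v)\big)$ controllable. The subsequent algebra is only borderline admissible: the factor $E/(K+y)$ behaves like $y$ for large speeds, which is exactly why this velocity-linear contribution fits, and why no faster growth in $v$ could be tolerated.
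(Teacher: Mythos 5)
Your proposal is correct and follows exactly the route the paper intends (the corollary is stated without an explicit proof, as a direct application of Theorem \ref{bso} to Eq.\ (\ref{ele}) with $F_t=db_t$). Your identification of $f-F_t^\sharp(v)$, the passage from the quadratic-form bound on $\partial_t a_t$ to the operator-norm bound on the self-adjoint $S$, and the algebraic verification of (\ref{bis}) with $K=1$ and a rescaled $g_r$ (which still satisfies the divergence condition of Remark \ref{rem}) supply precisely the details the paper leaves to the reader.
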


\begin{corollary}Let ${Q}$ be a $d$-dimensional manifold (the space) endowed with the
(possibly time dependent) positive definite metric $a_t$, 1-form
field $b_t$ and potential function $V(t,q)$ (all $C^r$, $r \ge 2$).
If $Q$ is compact then the E.-L. flow of ($\ref{ele}$) is complete.
\end{corollary}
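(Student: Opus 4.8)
The plan is to show that, when $Q$ is compact, all the hypotheses of Corollary \ref{bix} are automatically met with $g_r$ chosen to be a sufficiently large constant; completeness then follows immediately from that corollary. Thus the task reduces to establishing three uniform bounds over an arbitrary compact time interval $[-r,r]$.

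First I would bound the distance. By Proposition \ref{buy} the map $(t,q)\mapsto \rho_t(p,q)$ is continuous, hence attains a maximum $D_r$ on the compact set $[-r,r]\times Q$. Consequently $1+\rho_t(p,q)^2\le 1+D_r^2$ and $1+\rho_t(p,q)\le 1+D_r$ throughout $[-r,r]\times Q$, so once $g_r$ is taken constant each right-hand side appearing in the hypotheses of Corollary \ref{bix} is at least $g_r\ge 1$, and it suffices to dominate the left-hand sides by a constant.

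Next I would control the time derivative of the metric. Consider the continuous function $(t,q,v)\mapsto (\partial_t a_t)(v,v)/a_t(v,v)$ on the unit sphere bundle $\{(t,q,v):\ t\in[-r,r],\ a_t(v,v)=1\}$, which is compact because $Q$ is compact and $a_t$ is continuous and positive definite; this function is bounded in absolute value by some $C_r$, and by homogeneity in $v$ one obtains $\pm(\partial_t a_t)(v,v)\le C_r\, a_t(v,v)$ for all $(t,q,v)\in[-r,r]\times TQ$. Likewise the continuous functions $\Vert(\partial_t b_t)(t,q)\Vert_t$ and $\Vert(\partial_q V)(t,q)\Vert_t$ attain a common bound $N_r$ on the compact set $[-r,r]\times Q$.

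Finally I would choose $g_r\equiv \max\{1,\, C_r/2,\, N_r\}$, a constant. Then $\pm(\partial_t a_t)(v,v)\le C_r\, a_t(v,v)\le 2 g_r(1+\rho_t(p,q)^2)\, a_t(v,v)$, and $\Vert(\partial_t b_t)(t,q)\Vert_t,\ \Vert(\partial_q V)(t,q)\Vert_t\le N_r\le g_r\le g_r(1+\rho_t(p,q)^2)[1+\rho_t(p,q)]$, so both inequalities of Corollary \ref{bix} hold; moreover $G_r(y)=(1/g_r)\ln y\to+\infty$, so $g_r$ is admissible in the sense of that corollary. Corollary \ref{bix} then yields completeness in both the forward and backward directions. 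The only step requiring slight care is the metric-derivative estimate, where positive definiteness of $a_t$ together with compactness of the sphere bundle is what allows one to pass from a pointwise comparison to the uniform constant $C_r$; every other bound is a direct appeal to continuity on a compact set.
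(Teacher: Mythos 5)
Your argument is essentially the paper's own proof, spelled out in more detail: the paper likewise reduces the statement to Corollary \ref{bix} and obtains a single sufficiently large constant $g_r$ from compactness of $[-r,r]\times Q$, which is exactly what your sphere-bundle estimate for $\partial_t a_t$ and your uniform bounds on $\Vert\partial_t b_t\Vert_t$, $\Vert\partial_q V\Vert_t$ and $\rho_t(p,q)$ make precise. The one hypothesis of Corollary \ref{bix} you never verify is that each $(Q,a_t)$ is a \emph{complete} Riemannian manifold; the paper dispatches this in its first line via the Hopf--Rinow theorem (a compact manifold is complete in every Riemannian metric), and you should add that remark, since without it the appeal to Corollary \ref{bix} is not licensed. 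With that line inserted your proof is correct and coincides with the paper's.
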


\begin{proof}
By the Hopf-Rinow theorem each Riemann space $(Q,a_t)$ is complete.
At any point $(t,q)\in [-r,r]\times Q$ we can choose a sufficiently
large constant $g_r$ so as to satisfy the inequalities of Cor.\
\ref{bix} for arbitrary $v$ at that point. Thus by compactness of
$[-r,r]\times Q$  we can find a sufficiently large constant $g_r$
such that the assumptions of that corollary hold.
\end{proof}

\begin{remark}
This Lagrangian problem has the following application to
gravitational waves. Let $M:=T\times {Q} \times \mathbb{R}$,
$T=\mathbb{R}$,  with $Q$ as above, and let an element of $M$ be
denoted by $(t,q,y)$. Let $M$ be endowed with the Lorentzian metric
\begin{equation} \label{eis}
g=a_t \!-\!\dd t \otimes (\dd y-\!b_t) -\!(\dd y-\!b_t) \otimes \dd
t-2V \dd t ^2 ,
\end{equation}
where the fields $a_t$, $b_t$ and $V$ are as above and the time
orientation is given by the global timelike vector
$W=-[V-\frac{1}{2}]\p_y+\p_t$, $g(W,W)=-1$. The future directed
lightlike vector  $n=\p_y$ can be shown to be covariantly constant.
In fact,  these spacetimes can be characterized as those spacetimes
which admit a covariantly constant null vector which generates a
$(\mathbb{R},+)$-principal fiber bundle.
 Spacetimes of this form are called {\em generalized gravitational
 waves} \cite{minguzzi06d} (more restrictive families are considered
 in \cite{beem96} and \cite{flores06}, the difference being
 essentially that passing between {\em general} and {\em natural} mechanical
 systems \cite{greenwood77}).

Eisenhart \cite{eisenhart29} realized that the {\em spacelike}
geodesics of $(M,g)$ project to $E:=T\times {Q}$ into solutions of
the above Euler-Lagrange equation and that any such solution can be
regarded as such projection. The author showed that the same can be
said with {\em spacelike} replaced by {\em lightlike}
\cite{minguzzi06d}, a fact particularly useful for its connection
with causality theory \cite{hawking73,minguzzi06d,minguzzi12f}. Thus
there is a very fruitful one-to-one correspondence between
generalized gravitational waves and rheonomic mechanical systems,
which allows one to import methods and ideas from one field to the
other \cite{minguzzi12f}.

For instance, under the assumption that $(Q,a_t)$ are complete, the
completeness of the E.-L. flow is equivalent to the geodesic
completeness of $(M,g)$, thus Corollary \ref{bix} establishes
conditions for the geodesic completeness of $(M,g)$. However, we
shall leave the details of this result to a different work
\cite{minguzzi12f}.
\end{remark}

% In the next immediate corollary (see also
%\cite{candela12} for similar results) $\Vert F_t^\sharp \Vert$ is
%the norm of the linear operator $F_t^\sharp: TQ_q\to TQ_q$, whose
%components are $a^{-1}{}^{ij} F_{jk}$. It can be useful to observe
%that $\Vert F_t^\sharp \Vert^2$ is bounded by $\textrm{Tr}
%F_t^2:=F_{t\,ij} F_t^{ij}$.

%Actually, under these assumptions the completeness of the E.-L. flow
%is equivalent to the geodesic completeness of $(M,g)$,

\section{Conclusions}
We have generalized Wintner's theorem on first order ODE flows over
$\mathbb{R}^n$ to Riemannian manifolds with possibly time-dependent
metrics. We have also given a second order version which is
particularly well shaped for application to Lagrangian mechanics.
The proofs are based on some inequalities for the Riemannian
distance which  allowed us to build non-smooth proper functions over
the manifold. The lack of smoothness was handled using the
LaSalle-Bihari generalization of Gronwall's inequality. Our results
can be easily generalized to second order inclusions so as to deal
with other interesting aspects of classical mechanics, such as
static friction. An application to the theory of exact gravitational
waves and lightlike dimensional reduction was also commented.

\section*{Acknowledgments}   
This work has been partially supported by GNFM of INDAM.

%\bibliography{../../bibliografie/simultaneity,../../bibliografie/libri,../../bibliografie/miei,../../bibliografie/mieiPreprints,../../bibliografie/mieiProceedings}
%\bibliographystyle{mio}

\end{document}